\colorlet{punct}{red!60!black}
\definecolor{background}{HTML}{EEEEEE}
\definecolor{delim}{RGB}{20,105,176}
\colorlet{numb}{magenta!60!black}
\lstdefinelanguage{json}{
  basicstyle=\ttfamily\small,
  numbers=left,
  numberstyle=\tiny,
  stepnumber=1,
  numbersep=5pt,
  showstringspaces=false,
  breaklines=true,
  backgroundcolor=\color{background},
  literate=
     *{0}{{{\color{numb}0}}}{1}
      {1}{{{\color{numb}1}}}{1}
      {2}{{{\color{numb}2}}}{1}
      {3}{{{\color{numb}3}}}{1}
      {4}{{{\color{numb}4}}}{1}
      {5}{{{\color{numb}5}}}{1}
      {6}{{{\color{numb}6}}}{1}
      {7}{{{\color{numb}7}}}{1}
      {8}{{{\color{numb}8}}}{1}
      {9}{{{\color{numb}9}}}{1}
      {:}{{{\color{punct}{:}}}}{1}
      {,}{{{\color{punct}{,}}}}{1}
      {\{}{{{\color{delim}{\{}}}}{1}
      {\}}{{{\color{delim}{\}}}}}{1}
      {[}{{{\color{delim}{[}}}}{1}
      {]}{{{\color{delim}{]}}}}{1},
}
\newtheorem{theorem}{Theorem}
\newtheorem{lemma}{Lemma}
\newtheorem{corollary}{Corollary}
\newtheorem{assumption}{Assumption}
\newtheorem{remark}{Remark}
\newcolumntype{Y}{>{\centering\arraybackslash}X}
\newcolumntype{L}{>{\raggedright\arraybackslash}X}
\newtheoremstyle{break}
  {\topsep}{\topsep}
  {\itshape}{}%
  {\bfseries}{.}{\newline}
  {}%
\theoremstyle{remark}
\theoremstyle{plain}
\begin{document}
\title{Optimizing NetGPT via Routing-Based Synergy and Reinforcement Learning}

 \author{
 	Yuxuan Chen, Rongpeng Li, Xianfu Chen, Celimuge Wu, Chenghui Peng, Zhifeng Zhao, and Honggang Zhang

 	\thanks{Y. Chen and R. Li are with Zhejiang University, Hangzhou 310027, China, (email: \{cyx00, lirongpeng\}@zju.edu.cn).}

    \thanks{X. Chen is with the Shenzhen CyberAray Network Technology Co., Ltd, China (e-mail: xianfu.chen@ieee.org).}

    \thanks{C. Wu is with The University of Electro-Communications, Japan (e-mail: celimuge@uec.ac.jp).}

    \thanks{C. Peng is with Huawei Technologies Co., Ltd., Shanghai 210026, China (email: pengchenghui@huawei.com).}

    \thanks{Z. Zhao is with Zhejiang Lab, Hangzhou 310012, China as well as Zhejiang University, Hangzhou 310027, China (email: zhaozf@zhejianglab.org).}
        
    \thanks{H. Zhang is with Macau University of Science and Technology, Macau 999078, China (email: hgzhang@must.edu.mo).}
        
}

\maketitle

\begin{abstract}
Large language model (LLM) agents at the network edge offer low-latency execution for routine queries. In contrast, complex requests often require the superior capability of cloud models, incurring higher latency and cost. To navigate this quality-cost trade-off under dynamic network conditions, we propose a cloud-edge synergy for NetGPT that integrates network-aware routing with on-edge self-improvement. Specifically, our framework routes structured tool-calling requests to cloud or edge agents via a novel scoring policy. We prove that, under mild regularity assumptions, the optimal routing rule admits a unique fallback threshold with monotone dependence on bandwidth and round-trip time (RTT). Concurrently, based on the dataset collected from requests routed to the cloud and corresponding responses, we instantiate a schema-preserving reinforcement learning (RL) to improve the capability of the edge agent. We analyze a supervised finetuning (SFT)-anchored composite objective that combines a reverse-KL trust-region step with a forward-KL realignment toward the SFT prior, explaining stability and constraining policy drift. 
Both the network-aware routing policy and the edge agent are updated coherently. 
Experiments across controlled network states and pricing schedules demonstrate smooth quality-cost frontiers, consistent gains of dynamic fallback thresholds over fixed policies, and sustained reductions in offloading while maintaining task success and schema-correct outputs.
\end{abstract}

\begin{IEEEkeywords}
Cloud-edge collaboration, large language models, tool-calling, task offloading, adaptive routing.
\end{IEEEkeywords}

\section{Introduction}
The heterogeneous deployment of Large Language Models (LLMs) across the network is forming a NetGPT \cite{Chen2024NetGPT}. In this architecture, lightweight LLMs at the edge offer low latency but may struggle with complex queries, while more powerful and general-purpose cloud models lead to significant inference inefficiency and economic expenditure. The integration of external tools introduces even greater time-sensitivity \cite{OpenAI2023FunctionCalling} than vanilla LLM interactions \cite{Yao2023ReAct}. Consequently, there is a compelling need for a cloud-edge synergy for NetGPT. Such a collaboration promises low-latency interactions at the edge with on-demand access to the high-quality reasoning of the cloud \cite{Shi2016Edge, Li2018Edgent}. 

While devising a simple binary routing scheme (i.e., edge or cloud) sounds straightforward, an effective cloud-edge synergy is far more complicated. We face three key challenges. First, routing decisions are inherently state-dependent: the incremental quality from escalating a \emph{hard} query to a stronger cloud model must be weighed against \emph{network conditions} and \emph{inference latency} \cite{Mach2017MECsurvey}. Second, preference data collected from interacting with the cloud LLM can naturally be used to supervise the edge LLM \cite{Ouyang2022InstructGPT} for capability improvement. Therefore, the developed routing policy should have the capability of automatically adapting to dynamic network conditions and concept drift \cite{Besbes2014NonStationary}. Third, LLM agents must adhere to the tool-calling format (JSON) \cite{Schick2023Toolformer}. However, continuous improvement risks eroding schema adherence \cite{Agarwal2025ThinkInsideJSON}.

Existing paradigms cannot simultaneously solve these challenges. For example, cost-aware cascades \cite{chen-etal-2024-frugalgpt,lu-etal-2024-routing, Zheng2025DiSRouter} and preference-trained routers \cite{ong-etal-2025-routellm, ding-etal-2024-hybridllm} reduce average cost by sending \textit{easy} queries to smaller models and escalating \textit{hard} ones, but they typically operate as front-door or stage-wise mechanisms that are agnostic to link dynamics and seldom support continuous online improvement under schema constraints.
Representation-learning routers further move beyond shallow classifiers: contrastive designs such as RouterDC \cite{chen2024routerdc} learn query-conditioned embeddings to score per-model utility, while graph-based approaches like GraphRouter \cite{Feng2025GraphRouter} cast selection as inductive edge prediction on query-task-model heterogeneous graphs, improving generalization to unseen models and tasks, yet still treat latency and schema validity as exogenous variables.
Distributed or multi-agent variants \cite{yue-etal-2025-masrouter} emphasize ensemble specialization and coordination yet still decouple decisions from explicit network signals or rely on fixed acceptance rules. 
On the other hand, reinforcement learning (RL)-based online alignment pipelines — such as online direct preference optimization (DPO) \cite{Qi2024OnlineDPO} and iterative RL with human feedback (RLHF) workflows \cite{dong2024rlhf_workflow} — demonstrate how to continually refresh reward models and policies from streaming feedback under stability anchors, but they are typically designed for monolithic LLMs rather than tool-calling cloud-edge agents.
Consequently, there still exists a methodological gap in optimizing NetGPT.

\begin{table*}[t]
\small
\centering
\setlength{\tabcolsep}{3pt}      
\renewcommand{\arraystretch}{1.12}
\caption{\textsc{The Summary of Differences with Related Literature.}}
\label{tab:related_work}
\begin{tabular}{
  p{0.08\textwidth}  
  p{0.08\textwidth} 
  p{0.08\textwidth}  
  p{0.08\textwidth} 
  p{0.08\textwidth}  
  p{0.08\textwidth} 
  p{0.34\textwidth} 
}
\hline
\textbf{References} &
\centering\textbf{Multi-} \\ \centering\textbf{stage} &
\centering\textbf{Tool} \\ \centering\textbf{Schema-} \\ \centering\textbf{aware} &
\centering\textbf{Pref.} \\ \centering\textbf{Quality} &
\centering\textbf{Network} \\ \centering\textbf{Aware} &
\centering\textbf{Online} \\ \centering\textbf{improv.} &
\textbf{Brief description} \\
\hline
\cite{ong-etal-2025-routellm, ding-etal-2024-hybridllm, chen2024routerdc, Feng2025GraphRouter} &
\centering \Circle & \centering \Circle & \centering \CIRCLE & \centering \Circle & \centering \Circle &
Single-shot, front-door routing: a preference-trained scorer picks the model before generation. \\
\hline
\cite{chen-etal-2024-frugalgpt, lu-etal-2024-routing, wang2025mixllm} &
\centering \CIRCLE & \centering \Circle & \centering \CIRCLE & \centering \Circle & \centering \Circle &
Multi-stage cascade: apply a fixed quality threshold after each LLM to early-exit or escalate. \\
\hline
\cite{metallm} &
\centering \Circle & \centering \Circle & \centering \CIRCLE & \centering \Circle & \centering \CIRCLE &
Contextual-bandit router: adapts model selection online under cost/quality uncertainty; agnostic to link dynamics and schema. \\
\hline
\cite{Zheng2025DiSRouter} &
\centering \CIRCLE & \centering \Circle & \centering \CIRCLE & \centering \Circle & \centering \CIRCLE &
No explicit router: each LLM decides to forward/reject; trained for system utility. \\
\hline
\cite{yue-etal-2025-masrouter} &
\centering \Circle & \centering \CIRCLE & \centering \CIRCLE & \centering \Circle & \centering \Circle &
Multi-agent routing with schema-aware coordination across specialized agents. \\
\hline
\textbf{This work} &
\centering \textbf{\CIRCLE} & \centering \textbf{\CIRCLE} & \centering \textbf{\CIRCLE} & \centering \textbf{\CIRCLE} & \centering \textbf{\CIRCLE} &
Cloud-edge, tool-schema-aware multi-stage routing with state-dependent fallback thresholds enabling online adaptation. \\
\hline
\multicolumn{7}{r}{\footnotesize \textit{Notation:} \Circle\ indicates not included; \CIRCLE\ indicates fully included.} \\
\end{tabular}
\end{table*}
We address this gap by proposing a cloud-edge LLM pipeline that unifies tool-calling, network-aware routing, and online adaptation. Concretely, we use a unified router score trained from preference and quality signals. The score is produced by a lightweight reward model (RM) and is periodically refreshed using cached data from interactions with the cloud. Based on this score, a state-dependent fallback threshold is derived from network states like measured round-trip time (RTT) and bandwidth. Simultaneously, the edge LLM policy is also updated via RL \cite{Schulman2017PPO} with a cross-entropy supervised finetuning (SFT) anchor that preserves schema-correct tool-calling while ensuring stable updates. Furthermore, we analyze the theoretical simplicities and empirical advantages of the proposed pipeline. 
While highlighting the key differences with existing works in Table \ref{tab:related_work}, the main contributions of this paper are summarized as follows.
\begin{itemize}
    \item We cast cloud-edge routing as a unified score-threshold policy whose acceptance boundary is an explicit function of measured network RTT and bandwidth. Under mild regularity assumptions on the confidence score, we theoretically prove that the optimal offloading rule admits a unique state-dependent fallback threshold and can be characterized by a first-order balance between the cloud's marginal quality gain and its marginal cost.
    This yields monotone comparative statics to derive a network-aware fallback threshold.
    \item We unify tool-schema adherence with continual improvement by (i) using cached queries as on-policy supervision to refresh a lightweight RM, and (ii) optimizing the edge LLM policy via RL with an SFT anchor to maintain schema-correct tool-calling and stability during updates. By keeping routing decisions aligned with the evolving quality-latency-cost landscape, our design effectively tackles the underlying structural drift. 
    \item We present an end-to-end evaluation under controlled network and pricing regimes, showing that dynamic, network-aware fallback thresholds consistently dominate fixed policies on the quality-cost frontier. Meanwhile, SFT-anchored on-device RL contributes to preserving task success and schema-correct tool-calling, and periodic RM refresh keeps the router calibrated and improves routing accuracy. 
\end{itemize}

\begin{table}[t]
\centering
\caption{Notations used in the paper.}
\label{tab:Notations}
\footnotesize
\renewcommand{\arraystretch}{1.1}
\setlength{\tabcolsep}{4pt}
\begin{tabular}{>{\centering\arraybackslash}p{2.2cm} p{6cm}}
\toprule[0.75pt]
\textbf{Notation} & \textbf{Description} \\
\midrule[0.5pt]

$x_i$ & Task/request index $i$ \\
$H_{i,k}$ & Context of task $i$ at step $k$ (history, tool outputs, etc.) \\
$u_{i,k},\,u^{\star}_{i,k}$ & Structured actions produced respectively by the edge LLM and the cloud model (\texttt{tool}, \texttt{args}, \texttt{thought}) \\
$g_\psi$ & Reward model (router score model) with parameters $\psi$ \\
$s_{i,k}$ & Router/reward-model score at $(i,k)$ \\
$\tau(\cdot)$ & Network-aware fallback threshold (runtime: $\tau(\widehat S_{i,k})$) \\
$d_{i,k}\!\in\!\{E, C\}$ & Routing decision at $(i,k)$: edge accept (E) or cloud offload (C) \\
$S_{i,k},\widehat{S}{i,k}$ & Latent network state at step $k$ and its one-step-ahead estimate used in $\tau(\widehat S{i,k})$ \\
$\mathrm{RTT}_{i,k}$ & Baseline round-trip path latency (propagation/queuing; size-independent) at step $k$ \\
$\mathrm{BW}_{i,k}$  & Effective bandwidth at step $k$ (determines transmission time $T_{\mathrm{tx}}$) \\
$\varepsilon_{i,k},\,\Sigma$ & Gaussian jitter and its scale in the Gauss–Markov model for time-varying network state \\
$L_E,\,L_C$ & Latency of a local step (edge) and a cloud-offloaded step (uplink-compute-downlink) \\
$L_{i,k},\,L(x_i)$ & Realized latency at $(i,k)$; end-to-end latency of task $i$ \\
$N_{i,k}$ & Token usage of the cloud model at $(i,k)$ \\
$C_{i,k},\,C(x_i)$ & Step cost and total task cost (latency penalty + cloud cost) \\
$Q_{i,k},\,Q(x_i)$ & Step quality in $[0,1]$ and average task quality \\
$J(x)$ & Utility measuring the quality-cost trade-off \\
$\lambda$ & Trade-off coefficient on cost \\
$\pi_\theta$ & Edge LLM policy with parameters $\theta$ \\
$\pi_{\text{SFT}}$ & Supervised reference policy used for SFT anchoring \\
$r_{i,k}$ & Learning reward for online updates at $(i,k)$ \\
$\kappa(S)$ & Network scaling of cloud cost under state $S$ in comparative statics ($\Delta C_S=\kappa(S)\Delta C$) \\
$\rho(s)$ & Local benefit–cost ratio at score $s$ \\
$f(s)$ & Density of the confidence score $s$ \\
$C_{\text{tok}}$ & Unit price per cloud token used in the monetary-cost term \\
$\alpha_{\mathrm{RTT}},\,\beta_{\mathrm{BW}},\,\gamma_{\mathrm{hist}}$ & Coefficients in the linear dynamic fallback threshold (RTT/BW/history sensitivities) \\
$f_\omega(\cdot),\,\omega$ & Lightweight neural router (PolicyNet) and its parameters \\
$\gamma$ & Discount factor in the on-device RL objective \\
$\rho_{i,k}$ & PPO importance ratio \\
$A_{i,k}$ & Advantage estimate used in PPO\\
$\epsilon$ & PPO clipping range in the surrogate objective \\
$\beta_{\mathrm{KL}}$ & Reverse-KL penalty coefficient in PPO \\
$\eta$ & Forward-KL anchor (SFT realignment) weight \\
$\pi_t$ & Current policy at iteration $t$ in the two-stage update \\
$\mathcal{T}_{i,k}$ & Cached tuple $(x_i, H_{i,k}, u_{i,k}, u^{\star}_{i,k}, s_{i,k})$\\
$\mathcal{B}_{\mathrm{RM}},\,\mathcal{B}_{\mathrm{RL}}$ & Caches storing on-policy samples for RM and PPO updates \\

\bottomrule[0.75pt]
\end{tabular}
\end{table}

The remainder of the paper has been organized as follows. Section~\ref{sec:related} reviews the background and related work. Section~\ref{sec:Problem} describes the system model and formulates the optimization problem. Section~\ref{sec:method} introduces the proposed dynamic fallback threshold mechanism and online self-improvement framework from both operational and theoretical perspectives. 
Section~\ref{sec:evaluation} presents the experimental results. Section~\ref{sec:conclusion} concludes the paper.

Beforehand, major notations used throughout the paper are summarized in Table \ref{tab:Notations}.

\section{Related Work}
\label{sec:related}

\subsection{LLM Routing and Selective Inference}
Model selection across heterogeneous LLMs has rapidly evolved from early cost-aware cascades and difficulty-aware routers to theoretically grounded and benchmarked systems. Prototype cascades (e.g., \cite{chen-etal-2024-frugalgpt}) reduce cost by answering \textit{easy} queries with cheaper models and escalating \textit{hard} cases to stronger models, showing sizable cost-quality gains but relying on offline heuristics and static thresholds. \cite{ding-etal-2024-hybridllm} formalizes two-model routing via a learned difficulty predictor that trades quality for cost and reports up to 40\% fewer calls to the large model with negligible quality loss. On the supervised front, \cite{ong-etal-2025-routellm} trains a Bradley-Terry-Luce-style router on preference pairs mined from multi-model comparisons and evaluates routing under a cost/willingness-to-pay knob, and \cite{Hu2024RouterBench} standardizes multi-LLM routing evaluation and highlights the centrality of calibrated quality estimates. 

Representation-learning routers broaden beyond shallow classifiers: \cite{chen2024routerdc} uses dual-contrastive training to predict per-model utility. \cite{Feng2025GraphRouter} casts selection as inductive edge prediction on a query-model-task heterogeneous graph to improve generalization to unseen models and tasks. Concurrently, cascading receives theoretical treatment: \cite{dekoninck2025a} presents a unified analysis proving optimal strategies for both routing and cascades and introduces cascade routing, which outperforms either paradigm given accurate quality (and cost) estimators. Empirically oriented works refine the design space—e.g., \cite{yue2024large} explores MIX-based cascades for reasoning workloads, \cite{shen-etal-2025-sater} proposes dual-mode routing that blends pre-generation and cascade policies, and \cite{ding2025bestroute} augments routing with adaptive additional computation for small models to yield test-time optimal compute. Bandit and non-parametric perspectives further question router complexity: \cite{metallm} frames selection as a contextual bandit to adapt to cost/quality uncertainty, while \cite{li2025rethinking} shows strong performance of kNN-based routing versus learned routers on standardized benchmarks, underscoring the role of locality in embedding space.
\cite{yue-etal-2025-masrouter} extends routing to multi-agent settings by jointly choosing collaboration modes, allocating roles, and routing queries across heterogeneous LLMs, but still abstracts away time-varying network effects and offers limited online adaptation under bandit feedback—constraints that reduce suitability for cloud-edge deployments.

Despite progress, several structural limitations persist for our cloud-edge setting. First, network unawareness is pervasive: most studies treat \textit{cost} as API price or FLOPs and either ignore latency/bandwidth or treat them as constants; even works that model cost explicitly rarely integrate time-varying link measurements into the decision boundary, and recent surveys acknowledge that many routers ignore latency in practice \cite{wang2025mixllm}. Second, online self-improvement is rarely a primary design goal: routers are commonly trained offline with full supervision (quality labels for all models) and lack the capability of online adaptation. Third, schema-constrained tool-calling—now standard in agentic pipelines—poses a stability constraint largely absent from routing benchmarks. Motivated by these gaps, we propose a network-aware, schema-preserving routing pipeline with continual improvement. It uses a unified score to govern edge acceptance and offload, adapts fallback thresholds explicitly to measured RTT and bandwidth, and reuses cached escalated queries to refresh the router's reward model and the edge LLM policy.

\subsection{Online Learning and Adaptive Optimization in Networked Agents}
In the LLM era, online alignment seeks to improve models during deployment under bandit feedback, distribution shift, and tight latency, rather than relying solely on offline preference corpora. \cite{calandriello2024human} further formalizes online preference optimization with equivalences to Nash mirror descent and demonstrates how to update policies from streaming preferences with Kullback-Leibler (KL) control. To curb catastrophic forgetting in continual DPO, \cite{Qi2024OnlineDPO} introduces an online DPO method with a fast-slow memory that decouples rapid adaptation from a stable anchor to preserve earlier competencies.
Beyond pure DPO, self-play and self-rewarding methods close the loop by generating rewards on the fly. \cite{Yuan2024SelfRewarding} demonstrates that LLM-as-a-Judge can bootstrap iterative DPO and jointly improve both the model and its internal judge. To mitigate judge drift, \cite{wang2025cream} adds consistency regularization across iterations, improving reward reliability in self-rewarding pipelines. A game-theoretic perspective is provided by \cite{wu2025selfplay}, which frames alignment as a constant-sum game and proves convergence of iterative self-play preference optimization updates.

Several works study robustness and grouping in continual alignment. \cite{ramesh2024group} seeks worst-group improvements to avoid regressing minority preferences when updating online. On the RL side, \cite{shao2024deepseekmath} presents a group-relative PPO variant that removes the critic and estimates baselines from group scores, offering a lighter-weight alternative for iterative preference-driven improvement in reasoning-heavy tasks.
From a systematic perspective, \cite{dong2024rlhf_workflow} describes online iterative RLHF workflows that construct proxy preference models, generate on-policy data, and repeat align-evaluate-deploy cycles, reporting consistent gains over offline pipelines. A complementary theoretical analysis is provided by \cite{Ye2024OnlineIterativeRLHF}, which generalizes online RLHF to preference oracles beyond Bradley-Terry assumptions and develops sample-efficient procedures for querying and updating on the fly.

Despite these advances, a significant gap remains for tool-oriented, cloud-edge agents. First, the coupling between routing and learning implies that the same scalar signal acts as an accept-reject score at the edge and a reward for online adaptation. Nevertheless, the aforementioned decoupled design could compromise closed-loop improvement at deployment. Second, schema-preserving constraints, now standard via JSON-schema structured outputs, are typically absent from online preference updates even though they are essential for maintaining the tool-calling validity under latency budgets. Our approach addresses both by using a router-as-reward score to drive state-dependent acceptance and on-device updates, and by anchoring updates with an SFT anchor to ensure schema-correct tool-calling and stable improvement. 

\section{System Model and Problem Formulation}
\label{sec:Problem}

\begin{figure}
	\centering
	\includegraphics[width=0.495\textwidth]{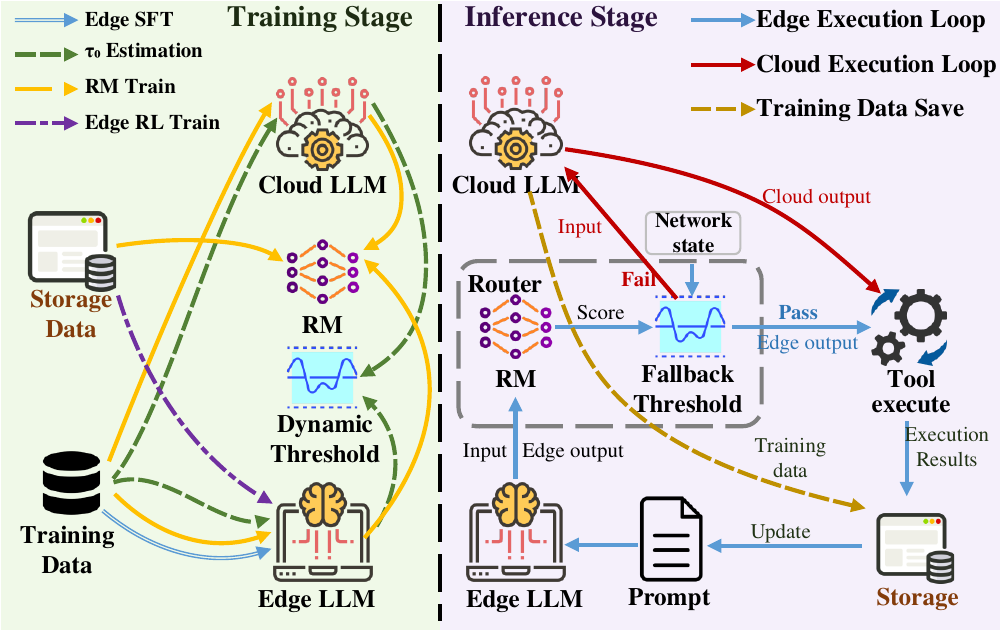}
	\caption{Overview of the proposed cloud-edge pipeline.}
	\label{fig:system_simple}
\end{figure}

\subsection{System Model}
\label{subsec:system_model}
Fig. \ref{fig:system_simple} provides a high-level overview of the cloud-edge synergy pipeline in NetGPT, where in response to an input query $x_i$, the model alternates between internal reasoning and tool invocation at each step.

\begin{figure*}
	\centering	\includegraphics[width=0.98\textwidth]{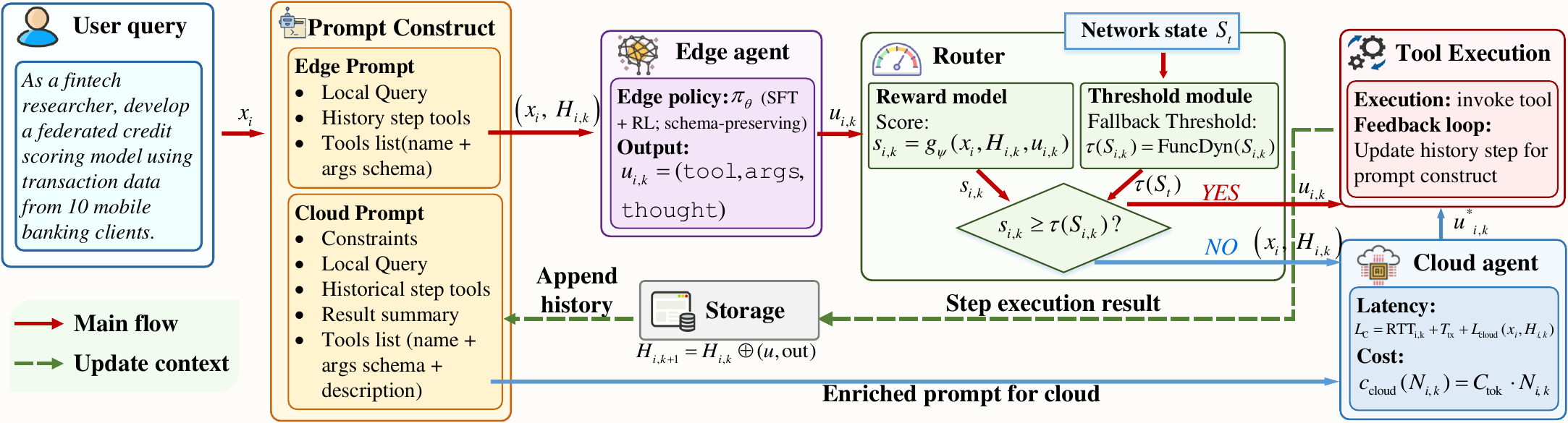}
	\caption{Online-stage of the proposed pipeline.}
	\label{fig:online_framework}
\end{figure*}
As shown in Fig.~\ref{fig:online_framework}, at each step $k$, the current context $H_{i,k}$ (comprising the accumulated tool outputs and reasoning traces up to step $k$) is combined with the query $x_i$ to form the edge prompt. The edge LLM policy $\pi_\theta$ then samples an action:
\begin{align}
u_{i,k} =(\texttt{tool},\texttt{args},\texttt{thought}) \sim \pi_\theta(\cdot \mid x_i, H_{i,k}). 
\end{align}
Here $u_{i,k}$ represents both the model's reasoning and its proposed tool operation.
Correspondingly, the reward model (RM) $g_\psi$ evaluates $u_{i,k}$ and produces a scalar score $s_{i,k}$:
\begin{align}
  s_{i,k}=g_\psi(x_i, H_{i,k}, u_{i,k})\in\mathbb{R}.
  \label{eq:reward}
\end{align}
Subsequently, the routing decision can be made by comparing the score with a network-aware fallback threshold $\tau(S_{i,k})$ with $S_{i,k}$ denoting the networking state. In other words, if $s_{i,k} \geq \tau(S_{i,k})$, the prompt will be solely handled at the edge; otherwise, the more powerful cloud LLM produces a refined output:
\begin{align}
u^{\star}_{i,k} \sim \pi_{\text{cloud}}(\cdot \mid x_i, H_{i,k}).
\end{align}
Lately, $u^{\star}_{i,k}$ serves as the preferred answer for both routing supervision and reward calibration.
In our work, we assume the edge LLM policy $\pi_\theta$, the RM $g_\psi$ and the fallback threshold $\tau(S_{i,k})$ has been initialized in Appendix \ref{sec:initialization}, while the cached tuple $\mathcal{T}_{i,k}\!\equiv\!(x_i, H_{i,k}, u_{i,k}, u^{\star}_{i,k}, s_{i,k})$ will be appended to the RM cache $\mathcal{B}_{\mathrm{RM}}$ and the RL cache $\mathcal{B}_{\mathrm{RL}}$ for online self-improvement, as detailed in Appendix \ref{sec:log}.

\subsubsection{Network State Model}
Following ITU-T Y.1541 delay/QoS objectives for IP networks \cite{ITU-T-Y1541-2011} and IETF RFC 6349's bandwidth--delay-product rationale for TCP throughput measurement \cite{rfc6349}, we model the \textit{network state} at step $k$ of query $i$ with two primary observables:
\begin{align}
S_{i,k} = \big(\mathrm{RTT}_{i,k},\ \mathrm{BW}_{i,k}\big),
\end{align}
where $\mathrm{RTT}_{i,k}$ denotes round-trip time and $\mathrm{BW}_{i,k}$ is bandwidth.

\subsubsection{Inference Cost and Quality Model}

As mentioned earlier, we formalize the routing decision for step $(i,k)$ as:
\begin{align}
d_{i,k} =
\begin{cases}
\text{Edge}, & s_{i,k}\ge \tau(\cdot), \\[4pt]
\text{Cloud}, & s_{i,k}< \tau(\cdot).
\end{cases}
\label{eq:route}
\end{align}

\noindent In words, $d_{i,k}$ selects \textsc{Edge} when the router score $s_{i,k}$ exceeds the fallback threshold $\tau(\cdot)$, and \textsc{Cloud} otherwise. Given this decision $d_{i,k}$, we model the step latency as:
\begin{align}
L_{i,k}=
\begin{cases}
L_{\mathrm{E}}(x_i,H_{i,k}), & d_{i,k}=\text{Edge}, \\[2pt]
L_{\mathrm{C}}(x_i,H_{i,k},S_{i,k}), & d_{i,k}=\text{Cloud}.
\end{cases}
\label{eq:latency}
\end{align}
Here, $L_{\mathrm{E}}$ is the on-device inference latency; $L_{\mathrm{C}}$ is the end-to-end offloading latency under the network state $S_{i,k}$.  
We decompose the cloud case into propagation, transfer, and cloud compute:
\begin{align}
&L_{\mathrm{C}}(x_i,H_{i,k},S_{i,k}) \label{eq:cloud_latency}
\\
=& \mathrm{RTT}_{i,k}
+ T_{\mathrm{tx}}(x_i,H_{i,k},u^{\ast}_{i,k},S_{i,k}) + L_{\mathrm{cloud}}(x_i,H_{i,k}), \notag
\end{align}
where $T_{\mathrm{tx}}(\cdot)$ is the payload-dependent transmission time under limited bandwidth and $L_{\mathrm{cloud}}(\cdot)$ is the cloud LLM's computation time. 

We couple latency and monetary terms into a step-level cost:
\begin{align}
  C_{i,k} &= \alpha L_{i,k} + C_{\mathrm{tok}} \cdot N_{i,k}.
\label{eq:cloud_cost}
\end{align}
Here $\alpha\ge0$ converts latency to a penalty so that quality-cost trade-offs can be scalarized; $C_{\mathrm{tok}}$ accounts for the price per cloud token used; and $N_{i,k}$ is the number of tokens on the cloud path (and $N_{i,k}=0$ when $d_{i,k}=\text{Edge}$).

For each step $(i,k)$, we define a quality score $Q_{i,k}\in[0,1]$ that reflects how well the model's output completes the required tool invocation or reasoning. The score is computed on-device using a frozen evaluation model trained on a disjoint dataset similar to the SFT or RM data. Each sample is cross-checked by human auditing before freezing the evaluator parameters.
We formalize it as:
\begin{align}
Q_{i,k}
&=\mathbb{I}\!\left[\texttt{schema}(u_{i,k})=1\right]\;
\tilde q_\phi(x_i,H_{i,k},u_{i,k}). 
\label{eq:quality}
\end{align}
Here $\tilde q_\phi(\cdot)\in[0,1]$ is the normalized task-quality score from the evaluator, and the indicator $\mathbb{I}[\cdot]$ ensures that any structural violation ($\texttt{schema}=0$) directly forces $Q_{i,k}=0$.

\subsection{Problem Formulation}
\label{subsec:formulation}
For each task $x_i$, the task-level quality and cost are defined by aggregating per-step quantities separately:
\begin{align}
Q(x_i) &= \frac{1}{K_i}\sum_{k=1}^{K_i} Q_{i,k}, \qquad
C(x_i) = \sum_{k=1}^{K_i} C_{i,k}. 
\end{align}
We then define a utility function that balances these two metrics:
\begin{align}
J(x_i) = Q(x_i) - \lambda C(x_i), \qquad \lambda > 0. 
\end{align}
The trade-off coefficient $\lambda$ controls the tolerance for latency and cost under given quality priorities.

The overall optimization objective is to maximize the expected utility across all tasks by jointly learning the network-aware fallback threshold function $\tau(\cdot)$ and the edge LLM policy parameters $\theta$:
\begin{align}
\max_{\tau(\cdot),\,\theta}\; \mathbb{E}_{x\sim\mathcal{D}}\!\left[J(x)\right]
= \max_{\tau(\cdot),\,\theta}\; \mathbb{E}\!\left[Q(x) - \lambda C(x)\right]. 
\label{eq:target}
\end{align}
Here, $x\sim\mathcal{D}$ denotes sampling a task from the deployment-time workload distribution; $\tau(\cdot)$ governs routing under the observed network state $S$; and $\theta$ parameterizes the on-device policy that generates structured tool calls.

The key difficulty to solve \eqref{eq:target} lies in coupling short-term network adaptation with long-term model improvement. Existing formulations typically isolate routing from model adaptation \cite{ong-etal-2025-routellm, chen2024routerdc}, and often assume fixed or exogenous communication costs or static model behavior \cite{chen-etal-2024-frugalgpt, ding-etal-2024-hybridllm}. Nevertheless, decoupling of network-aware fallback thresholds and underlying model improvement could significantly compromise the efficiency. Instead, the fallback threshold $\tau(\cdot)$ should react to instantaneous variations of the network state $S$ while the edge LLM $\pi_\theta$ and RM $g_\psi$ continuously improve according to observed data during online inference. These two processes operate on different time scales but jointly determine the system's stability and efficiency. 

\section{Method}
\label{sec:method}
To maximize the expected utility $J$, we combine dynamic fallback threshold adaptation with schema-preserving RL, both guided by a unified router score. The former responds rapidly to changing network states, while the latter gradually improves the edge LLM's competence under data drift without compromising tool-schema correctness.

\subsection{Dynamic Fallback Thresholds}
\label{subsec:dynamic_th}
To characterize the routing behavior under varying network states, we first analyze the threshold-based decision rule and its optimality conditions. Without loss of generality, for each request with confidence score $s$, we assume that the edge execution yields quality $Q_E$ and cost $C_E$, while the cloud execution yields $Q_C$ and $C_C$.
We define the marginal cloud-edge differences: 
\begin{align}
    \Delta Q(s) \triangleq \mathbb{E}[\, Q_C - Q_E \mid s \,], \quad
    \Delta C(s) \triangleq \mathbb{E}[\, C_C - C_E \mid s \,].
    \label{eq:differentials}
\end{align}
Intuitively, $\Delta Q(s)$ is the expected quality gain (if any) from escalating a score-$s$ request to the cloud rather than accepting the edge inference, while
$\Delta C(s)$ is the induced extra cost (e.g., latency, transmission, cloud inference). Let $f(s)$ be the density of the confidence score $s$, and we make the following assumption.

\begin{assumption}[Regularity of the score]
\label{ass:regularity} 
The confidence score $s$ admits an absolutely continuous distribution with density $f$ that is strictly positive on the relevant support.
\end{assumption}
Assumption \ref{ass:regularity} only requires conditional continuity with respect to the score, but does not specify any particular parametric model for $(Q_E, Q_C,C_E, C_C)$. Therefore, it can be easily met.

We now express the sensitivities of $\big(Q(\tau), C(\tau)\big)$ with respect to the fallback threshold $\tau$.

\begin{lemma}[Frontier sensitivities]
\label{lem:sensitivity}
Under Assumption~\ref{ass:regularity},
\begin{align}
    \frac{dQ}{d\tau}(\tau) = f(\tau)\, \Delta Q(\tau), \qquad
    \frac{dC}{d\tau}(\tau) = f(\tau)\, \Delta C(\tau).
    \label{eq:sensitivity}
\end{align}
\end{lemma}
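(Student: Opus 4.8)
The plan is to write the aggregate quality $Q(\tau)$ and cost $C(\tau)$ as explicit integrals over the score distribution, partitioned at the threshold according to the routing rule \eqref{eq:route}, and then differentiate by the fundamental theorem of calculus. Since a request with score $s$ is served at the edge when $s\ge\tau$ and offloaded to the cloud when $s<\tau$, I would first record
\begin{align}
Q(\tau) = \int_{\tau}^{\infty} \mathbb{E}[Q_E \mid s]\, f(s)\, ds + \int_{-\infty}^{\tau} \mathbb{E}[Q_C \mid s]\, f(s)\, ds,
\end{align}
with the analogous expression for $C(\tau)$ obtained by replacing $Q_E, Q_C$ with $C_E, C_C$. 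Here the conditional expectations $\mathbb{E}[\cdot\mid s]$ absorb all randomness in $(Q_E, Q_C, C_E, C_C)$ beyond the score, and $f$ is the score density guaranteed to exist by Assumption~\ref{ass:regularity}.

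The second step is to differentiate each expression in $\tau$. The first integral carries $\tau$ as its lower limit and the second carries $\tau$ as its upper limit, so by the Leibniz rule the two boundary contributions are $-\mathbb{E}[Q_E\mid\tau]\,f(\tau)$ and $+\mathbb{E}[Q_C\mid\tau]\,f(\tau)$, respectively. Combining them and invoking the definition \eqref{eq:differentials} of $\Delta Q$ yields
\begin{align}
\frac{dQ}{d\tau}(\tau) = f(\tau)\big(\mathbb{E}[Q_C \mid \tau] - \mathbb{E}[Q_E \mid \tau]\big) = f(\tau)\, \Delta Q(\tau),
\end{align}
which is the claimed identity; the cost identity follows verbatim with $\Delta C$ in place of $\Delta Q$. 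Intuitively, a marginal increase in $\tau$ reclassifies exactly the boundary mass $f(\tau)\,d\tau$ of requests from edge to cloud, and each such request swaps its edge contribution for its cloud contribution, producing the differential $\Delta Q(\tau)$ weighted by the boundary density.

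The main technical point — and the only place regularity is genuinely used — is justifying the pointwise application of the fundamental theorem at $\tau$. This requires the integrand $s\mapsto \mathbb{E}[Q_C\mid s]\,f(s)$ (and its edge counterpart) to be continuous at $\tau$, which is precisely what the conditional-continuity reading of Assumption~\ref{ass:regularity} supplies: absolute continuity with a strictly positive density rules out atoms and degenerate conditioning, while continuity of $s\mapsto \mathbb{E}[Q_\bullet\mid s]$ (finite since $Q\in[0,1]$ and the costs are integrable) makes the integrands continuous. Where continuity fails only on a null set, the identities still hold almost everywhere via the Lebesgue differentiation theorem, which suffices for the downstream first-order balance condition. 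I would close by noting that no parametric form for $(Q_E, Q_C, C_E, C_C)$ is required, consistent with the discussion following Assumption~\ref{ass:regularity}.
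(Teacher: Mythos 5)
Your proposal is correct and follows essentially the same route as the paper's proof: the identical integral decomposition of $Q(\tau)$ partitioned at the threshold, followed by Leibniz's rule to extract the two boundary terms $-\mathbb{E}[Q_E\mid s=\tau]f(\tau)$ and $+\mathbb{E}[Q_C\mid s=\tau]f(\tau)$, combined via the definition of $\Delta Q$, with the cost case handled analogously. Your additional remarks on where continuity is genuinely needed (and the almost-everywhere fallback) are a careful elaboration of the regularity the paper invokes implicitly, but they do not change the argument.
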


\begin{proof}
By the definition of the score-threshold policy,
\begin{align}
    Q(\tau) = \int_{\tau}^{\infty} \mathbb{E}[Q_E\mid s] f(s)\,ds 
    + \int_{-\infty}^{\tau} \mathbb{E}[Q_C\mid s] f(s)\,ds.
\end{align}
Applying Leibniz's rule and continuity yields
\begin{align}
    \frac{dQ}{d\tau}(\tau)
    &= -\mathbb{E}[Q_E\mid s=\tau]f(\tau) 
    + \mathbb{E}[Q_C\mid s=\tau]f(\tau)\notag \\
    &= f(\tau)\Delta Q(\tau).
\end{align}
The identity for $dC/d\tau$ is analogous.
\end{proof}

A direct corollary of Lemma \ref{lem:sensitivity} is the \emph{frontier slope identity}:
\begin{align}
    \frac{dQ}{dC}(\tau)
    = \frac{\Delta Q(\tau)}{\Delta C(\tau)}
    = \rho(\tau),
    \label{eq:slope}
\end{align}
i.e., the slope of the achievable $(Q, C)$ frontier at fallback threshold $\tau$ equals the local benefit–cost ratio of requests with score $s=\tau$.
\begin{assumption}[Informativeness of the score]
    The cloud-edge marginal cost differential $\Delta C(s)$ is strictly positive on its support; $\Delta Q(\cdot)$ and $\Delta C(\cdot)$ are continuous with the ratio $\rho(s)\triangleq \Delta Q(s)/\Delta C(s)$ strictly decreasing in $s$.
    \label{ass:cost}
\end{assumption}
Assumption \ref{ass:cost} is natural: as the confidence score increases, the cloud's incremental benefit relative to the edge diminishes faster than its incremental cost, making a single switch optimal. 

We next solve the scalar optimization $\max_{\tau} J(\tau)$ and show the uniqueness of the optimal fallback threshold $\tau^{\ast}$.

\begin{theorem}[Unique optimal fallback threshold (first-order balance)]
\label{thm:threshold}
Under Assumption~\ref{ass:regularity} and Assumption~\ref{ass:cost}, the score-threshold policy is optimal among all measurable policies that depend only on the score, and the maximizer $\tau^*$ of $J(\tau)$
is unique. It is characterized by the first-order balance
\begin{align}
    \Delta Q\!\big(\tau^*\big) = \lambda\, \Delta C\!\big(\tau^*\big) , \qquad
    \operatorname{sign}\!\big(J'(\tau)\big)
    = \operatorname{sign}\!\big(\rho(\tau)-\lambda\big).
    \label{eq:firstorder}
\end{align}
\end{theorem}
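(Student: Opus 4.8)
The plan is to prove the three assertions—global optimality of a threshold rule, uniqueness of the maximizer, and the first-order balance—in that logical order, leaning on Lemma~\ref{lem:sensitivity} for the calculus and on the strict monotonicity of $\rho$ from Assumption~\ref{ass:cost} for the structural conclusions. The first claim rests on a pointwise (bang-bang) comparison, while the last two follow by elementary single-crossing reasoning.

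First I would establish optimality among all score-measurable policies by a pointwise argument. Writing an arbitrary randomized policy as $p(s)\in[0,1]$, the probability of offloading a score-$s$ request to the cloud, the expected utility decomposes as
\begin{align}
J(p) = \int \Big( \mathbb{E}[Q_E - \lambda C_E \mid s] + p(s)\,\big(\Delta Q(s) - \lambda \Delta C(s)\big) \Big) f(s)\, ds.
\end{align}
Because the integrand is separable across $s$ and affine in the local decision $p(s)$, it is maximized pointwise by the bang-bang choice $p^{*}(s) = \mathbb{I}\big[\Delta Q(s) - \lambda \Delta C(s) > 0\big]$. Since $\Delta C(s) > 0$ by Assumption~\ref{ass:cost}, this is equivalent to offloading exactly when $\rho(s) > \lambda$; and since $\rho$ is strictly decreasing, $\{\rho > \lambda\} = \{s < \tau^{*}\}$ with $\tau^{*}$ solving $\rho(\tau^{*}) = \lambda$. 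Hence the pointwise optimum is precisely the threshold rule of eq.~\eqref{eq:route} evaluated at $\tau^{*}$, so no score-measurable policy can outperform the best threshold policy (the tie set $\{\rho(s) = \lambda\}$ is $f$-null by Assumption~\ref{ass:regularity} and hence immaterial).

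Next, restricting to threshold policies, I would differentiate using Lemma~\ref{lem:sensitivity}:
\begin{align}
J'(\tau) = \frac{dQ}{d\tau}(\tau) - \lambda \frac{dC}{d\tau}(\tau) = f(\tau)\,\big(\Delta Q(\tau) - \lambda \Delta C(\tau)\big) = f(\tau)\,\Delta C(\tau)\,\big(\rho(\tau) - \lambda\big).
\end{align}
Because $f(\tau) > 0$ and $\Delta C(\tau) > 0$, the sign identity $\operatorname{sign}(J'(\tau)) = \operatorname{sign}(\rho(\tau) - \lambda)$ is immediate. Strict monotonicity of $\rho$ then forces $\rho(\tau) - \lambda$ to cross zero at most once, and from above, so $J$ is strictly increasing where $\rho(\tau) > \lambda$ and strictly decreasing where $\rho(\tau) < \lambda$; this single-crossing (quasi-concavity) makes the stationary point $\tau^{*}$ the unique global maximizer, characterized by $\rho(\tau^{*}) = \lambda$, i.e., $\Delta Q(\tau^{*}) = \lambda\, \Delta C(\tau^{*})$.

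The hard part will be the measurable-policy optimality rather than the differentiation: I must justify interchanging the pointwise maximization with the integral—standard for an objective that is an integral of a per-score term affine in the decision, but worth stating carefully—and I should separately dispose of the boundary regimes in which $\rho - \lambda$ keeps a constant sign on the support. If $\rho(s) > \lambda$ everywhere the optimizer pushes $\tau^{*}$ to the upper end of the support (offload all), and symmetrically if $\rho(s) < \lambda$ everywhere; the interior balance of eq.~\eqref{eq:firstorder} presumes a genuine crossing, which Assumption~\ref{ass:cost} renders the generic case on the relevant support.
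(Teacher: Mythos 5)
Your proposal is correct, and its second half—computing $J'(\tau)=f(\tau)\,\Delta C(\tau)\,\bigl(\rho(\tau)-\lambda\bigr)$ via Lemma~\ref{lem:sensitivity} and invoking the strict decrease of $\rho$ for single-crossing, uniqueness, and the balance $\Delta Q(\tau^*)=\lambda\,\Delta C(\tau^*)$—is essentially identical to the paper's proof. Where you genuinely diverge is the first claim: the paper disposes of optimality among all score-measurable policies in one sentence (``follows from the single-crossing of $\rho(\cdot)$''), whereas you actually prove it, by writing an arbitrary randomized policy as an offload probability $p(s)\in[0,1]$, observing that the objective is an integral of a term affine in $p(s)$, and maximizing pointwise to obtain the bang-bang rule $p^*(s)=\mathbb{I}[\rho(s)>\lambda]$, which monotonicity of $\rho$ converts into exactly the threshold rule of Eq.~\eqref{eq:route} (with the tie set $\{\rho(s)=\lambda\}$ being a single point, hence $f$-null). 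This buys two things the paper's argument does not make explicit: a verifiable reduction from the policy class to the scalar problem, and a correct handling of randomized policies. Your closing caveat is also a genuine sharpening rather than pedantry: the paper's assertion that a $\tau^*$ with $\rho(\tau^*)=\lambda$ \emph{exists} silently requires $\lambda$ to lie in the range of $\rho$ on the support; if $\rho-\lambda$ has constant sign, the optimum sits at a boundary (offload-all or accept-all) and the interior first-order balance in Eq.~\eqref{eq:firstorder} is vacuous. If you wanted to match the theorem statement exactly as written, you would either note this range condition as an implicit hypothesis or state the conclusion in the boundary cases separately—something the paper itself should arguably have done.
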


\begin{proof}
By Lemma~\ref{lem:sensitivity},
\begin{align}
    J'(\tau)
    &= \frac{dQ}{d\tau}(\tau) - \lambda \frac{dC}{d\tau}(\tau) \notag \\
    &= f(\tau)\!\left( \Delta Q(\tau) - \lambda \Delta C(\tau) \right)  \\
    &= f(\tau)\Delta C(\tau)\!\left( \rho(\tau)-\lambda \right).\notag
\end{align}
Assumption~\ref{ass:regularity} and Assumption~\ref{ass:cost} ensure $f(\tau)>0$ and $\Delta C(\tau)>0$, respectively. 
Hence, the sign of $J'(\tau)$ is the sign of $\rho(\tau)-\lambda$. 
Because $\rho(\cdot)$ is strictly decreasing and continuous, 
there exists a unique $\tau^*$ satisfying $\rho(\tau^*)=\lambda$; 
at that point $J'(\tau^*)=0$, and strict monotonicity of $\rho$ implies $J'$ changes sign, yielding a unique maximizer $\tau^*$. 
Optimality of the fallback threshold structure among score-based policies follows from the single-crossing of $\rho(\cdot)$. 
Finally, by Lemma~\ref{lem:sensitivity}, the frontier slope at $\tau$ equals $dQ/dC=\rho(\tau)$, so it equals $\lambda$ at $\tau^*$.
\end{proof}

\begin{remark}
Under the mild regularity conditions of Theorem~\ref{thm:threshold}, the Pareto frontier traced by $\tau \mapsto (Q(\tau),C(\tau))$ is strictly monotone and differentiable. Meanwhile, the optimal routing policy possesses three interpretable properties:
(i) the optimal measurable score-based policy is threshold-shaped;
(ii) the fallback threshold $\tau^{\ast}$ is unique and satisfies a first-order balance
between the marginal expected quality gain and the marginal offload cost;
and (iii) 
at the unique maximizer $\tau^*$, its tangent slope equals $\lambda$.
\end{remark}

This result provides a principled basis for incorporating network state $S$ (e.g., bandwidth, RTT) into the fallback threshold $\tau$, which we extend to the state-aware case below.
Our goal is to characterize how the optimizer 
$\tau^*(S)$ 
of 
$J(\tau;S)=Q(\tau;S)-\lambda C(\tau;S)$ 
shifts with $S$.

\begin{assumption}[Separable network impact]
\label{ass:network}
There exists a strictly positive, continuous scaling $\kappa(S)$ such that 
$\Delta C_S(s)=\kappa(S)\,\Delta C(s)$ for all $s$; 
moreover, the marginal quality gain is governed by task difficulty captured by the score and is not distorted by the link, i.e., $\Delta Q_S(s)=\Delta Q(s)$.
\end{assumption}

By Assumption \ref{ass:network} and Theorem \ref{thm:threshold}, the optimal fallback threshold $\tau^*(S)$ is characterized by
\begin{align}
\Delta Q\!\big(\tau^*(S)\big)
= \lambda\,\Delta C_S\!\big(\tau^*(S)\big)
= \lambda\,\kappa(S)\,\Delta C\!\big(\tau^*(S)\big), 
\label{eq:network-optimality}
\end{align}
or equivalently $\rho\!\big(\tau^*(S)\big)=\lambda\,\kappa(S)$. 
\begin{theorem}[Network influence on $\tau^*$]
\label{thm:monotone}
Under Assumption \ref{ass:cost} and Assumption~\ref{ass:network}, 
if $\kappa(S_1)>\kappa(S_2)$ then $\tau^*(S_1)<\tau^*(S_2)$. 
\end{theorem}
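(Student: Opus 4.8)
The plan is to read the comparative static directly off the optimality characterization $\rho\big(\tau^*(S)\big)=\lambda\,\kappa(S)$ established in \eqref{eq:network-optimality}, exploiting the strict monotonicity of $\rho$ from Assumption~\ref{ass:cost}. At its core the argument is a one-line inversion of a strictly decreasing function, so almost all of the care goes into confirming that $\tau^*(S)$ is a well-defined object for each state before comparing values across states.

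First I would fix an arbitrary network state $S$ and re-invoke Theorem~\ref{thm:threshold}, but now with the effective trade-off coefficient $\lambda_{\mathrm{eff}}(S)\triangleq\lambda\,\kappa(S)$ in place of $\lambda$. Since $\lambda>0$ and Assumption~\ref{ass:network} gives $\kappa(S)>0$, we have $\lambda_{\mathrm{eff}}(S)>0$; and Assumption~\ref{ass:cost} supplies a continuous, strictly decreasing $\rho$. The single-crossing structure underlying Theorem~\ref{thm:threshold} therefore yields a unique $\tau^*(S)$ solving $\rho\big(\tau^*(S)\big)=\lambda_{\mathrm{eff}}(S)$, which is precisely \eqref{eq:network-optimality}. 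This makes $\tau^*(\cdot)$ a genuine function of $S$ through $\kappa$, so that the two quantities $\tau^*(S_1)$ and $\tau^*(S_2)$ in the claim are meaningful.

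Next I would compare the two states. From $\kappa(S_1)>\kappa(S_2)$ and $\lambda>0$, multiplication by $\lambda$ preserves the inequality, so $\lambda\,\kappa(S_1)>\lambda\,\kappa(S_2)$. Substituting the optimality condition on each side gives $\rho\big(\tau^*(S_1)\big)>\rho\big(\tau^*(S_2)\big)$. Because $\rho$ is strictly decreasing, its contrapositive reads: $a\ge b\Rightarrow\rho(a)\le\rho(b)$; hence $\rho\big(\tau^*(S_1)\big)>\rho\big(\tau^*(S_2)\big)$ forces $\tau^*(S_1)<\tau^*(S_2)$, which is the desired conclusion.

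I do not expect a genuine obstacle here; the only point deserving rigor is the well-definedness step, namely that $\lambda_{\mathrm{eff}}(S)$ lies in the range of $\rho$ so that a unique crossing exists — and this is inherited wholesale from Theorem~\ref{thm:threshold} under Assumptions~\ref{ass:regularity}--\ref{ass:cost}, the present result merely tracking how that crossing point slides as the right-hand side is rescaled by $\kappa$. If a differential (rather than ordinal) statement were preferred, the same sign follows from the implicit function theorem whenever $\rho$ is differentiable: differentiating $\rho\big(\tau^*(S)\big)=\lambda\,\kappa(S)$ yields $\rho'(\tau^*)\,\tfrac{d\tau^*}{d\kappa}=\lambda$, so $\tfrac{d\tau^*}{d\kappa}=\lambda/\rho'(\tau^*)<0$ since $\rho'<0$, confirming the monotone dependence without any assumption beyond differentiability.
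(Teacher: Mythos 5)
Your proof is correct and follows essentially the same route as the paper's: both start from the first-order condition $\rho\big(\tau^*(S)\big)=\lambda\,\kappa(S)$, multiply the inequality $\kappa(S_1)>\kappa(S_2)$ by $\lambda$, and invert the strictly decreasing $\rho$ to conclude $\tau^*(S_1)<\tau^*(S_2)$. Your added care about well-definedness (re-invoking Theorem~\ref{thm:threshold} with the effective coefficient $\lambda\,\kappa(S)$) and your closing implicit-function-theorem remark are sound; the latter is precisely the paper's Corollary~\ref{cor:sensitivity-network}.
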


\begin{proof}
Start from the first-order condition at state $S$:
\begin{align}
\rho\!\left(\tau^*(S)\right)=\lambda\,\kappa(S).
\end{align}
Pick $S_1,S_2$ with $\kappa(S_1)>\kappa(S_2)$. Then
\begin{align}
\lambda\,\kappa(S_1)>\lambda\,\kappa(S_2).
\end{align}
By strict monotonic decrease of $\rho(\cdot)$ (Assumption~\ref{ass:cost}), 
the equation $\rho(\tau)=\lambda\,\kappa(S)$ has a unique solution and the solution moves in the \emph{opposite} direction of the right-hand side. 
Hence
\begin{align}
\rho\!\big(\tau^*(S_1)\big)>\rho\!\big(\tau^*(S_2)\big)
\quad\Longrightarrow\quad
\tau^*(S_1)<\tau^*(S_2),
\end{align}
which proves the theorem. 
\end{proof}
\begin{remark}
Theorem \ref{thm:monotone} implies that if $\kappa$ is strictly decreasing in available bandwidth $\mathrm{BW}$ and strictly increasing in RTT or cloud-unit price $C_{\mathrm{tok}}$, then $\tau^*$ is strictly increasing in $\mathrm{BW}$ and strictly decreasing in RTT and $C_{\mathrm{tok}}$.
\end{remark}
\begin{corollary}[Local sensitivity of $\tau^*(S)$]
\label{cor:sensitivity-network}
Suppose $\rho$ is continuously differentiable and $\rho'(\tau) \triangleq \frac{d \rho}{d \tau}<0$. 
Then $\tau^*$ is continuously differentiable in $\kappa$, and
\begin{align}
\frac{d\tau^*}{d\kappa}(S)
= \frac{\lambda}{\rho'\!\big(\tau^*(S)\big)} < 0.
\label{eq:implicit-derivative}
\end{align}
\end{corollary}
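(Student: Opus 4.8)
The plan is to apply the implicit function theorem to the state-aware first-order condition already isolated in equation \eqref{eq:network-optimality}, namely $\rho\!\big(\tau^*(S)\big)=\lambda\,\kappa(S)$. Here I would treat $\kappa$ itself as the scalar comparative-statics variable (rather than differentiating through the map $S\mapsto\kappa(S)$), since the corollary asks only for the derivative $d\tau^*/d\kappa$. Accordingly, I would define the defining relation
\begin{align}
F(\tau,\kappa)\triangleq\rho(\tau)-\lambda\,\kappa=0,
\label{eq:Fdef}
\end{align}
whose solution $\tau=\tau^*$ is exactly the unique optimal threshold guaranteed by Theorem~\ref{thm:threshold} together with the strict monotonicity of $\rho$ in Assumption~\ref{ass:cost}.

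First I would verify the hypotheses of the implicit function theorem at a fixed reference state. By assumption $\rho$ is continuously differentiable, so $F$ is $C^1$ jointly in $(\tau,\kappa)$; moreover $\partial F/\partial\tau=\rho'(\tau^*)<0$ is nonzero by the standing hypothesis $\rho'<0$. These two facts are precisely what the implicit function theorem requires, so it yields a locally unique, continuously differentiable solution branch $\kappa\mapsto\tau^*(\kappa)$ passing through the reference point. I would remark that this local branch must coincide with the global optimizer, because Theorem~\ref{thm:threshold} already establishes that $\rho(\tau)=\lambda\kappa$ has a unique root for each admissible $\kappa$; hence there is no ambiguity in identifying the IFT branch with $\tau^*$.

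Next I would differentiate the identity $F\!\big(\tau^*(\kappa),\kappa\big)=0$ with respect to $\kappa$ by the chain rule, giving
\begin{align}
\rho'\!\big(\tau^*(S)\big)\,\frac{d\tau^*}{d\kappa}(S)-\lambda=0,
\label{eq:chain}
\end{align}
and then solve for the derivative to obtain $d\tau^*/d\kappa=\lambda/\rho'\!\big(\tau^*(S)\big)$, exactly as claimed in \eqref{eq:implicit-derivative}. The sign follows immediately: $\lambda>0$ while $\rho'\!\big(\tau^*(S)\big)<0$, so the ratio is strictly negative.

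I do not expect a genuine obstacle in this argument, as it is a textbook application of the implicit function theorem; the result is essentially the quantitative (local, differential) refinement of the monotone comparative statics already proved in Theorem~\ref{thm:monotone}. The only points requiring care are bookkeeping ones: ensuring the nonvanishing-derivative condition $\rho'(\tau^*)\neq0$ is invoked to license the inversion, and explicitly reconciling the local IFT solution with the global uniqueness from Theorem~\ref{thm:threshold} so that $\tau^*$ is unambiguously the optimizer rather than merely a stationary point. Both are handled by the standing assumptions, so the proof is short.
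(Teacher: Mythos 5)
Your proposal is correct and follows essentially the same route as the paper's proof: both define the residual $F(\tau,\kappa)=\rho(\tau)-\lambda\kappa$ and implicitly differentiate the first-order condition to get $d\tau^*/d\kappa=-F_\kappa/F_\tau=\lambda/\rho'(\tau^*)$. Your version is slightly more careful than the paper's—explicitly checking the implicit function theorem hypotheses, the identification of the local branch with the global optimizer, and the sign—but these are refinements of the same argument, not a different one.
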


\begin{proof}
Define the residual:
\begin{align}
F(\tau,\kappa)\triangleq\rho(\tau)-\lambda\,\kappa.
\end{align}
At $\big(\tau^*(S),\kappa(S)\big)$, the optimality condition gives 
$F\!\big(\tau^*(S),\kappa(S)\big)=F_\tau(\tau^*,\kappa)\,d\tau^* + F_\kappa(\tau^*,\kappa)\,d\kappa=0$, which implies:
\begin{align}
\frac{d\tau^*}{d\kappa} &= -\frac{F_\kappa}{F_\tau}
= -\frac{-\lambda}{\rho'(\tau^*)}
= \frac{\lambda}{\rho'(\tau^*)}.
\end{align}
This establishes the corollary.
\end{proof}

Finally, the frontier's local geometry inherits the same parameterization: 
for each fixed $S$ and fallback threshold $\tau$,
\begin{align}
\frac{dQ}{dC}(\tau;S)
= \frac{\Delta Q(\tau)}{\Delta C_S(\tau)}
= \frac{\rho(\tau)}{\kappa(S)}.
\label{eq:frontier-network}
\end{align}
Therefore, improving network states (smaller $\kappa(S)$) steepens the attainable quality-cost slope at the same fallback threshold. Together with the first-order balance $\rho(\tau^{\ast}(S)) = \lambda\,\kappa(S)$, this shows that the network state shifts the optimal fallback threshold monotonically. 
In practice, this motivates either a linearized functional mapping (FuncDyn) or a lightweight one-hidden-layer router (PolicyNet).

\begin{itemize}
  \item \emph{FuncDyn}:
It computes the fallback threshold $\tau_{i,k}$ as a smooth function of the observed link condition and recent execution history:
\begin{align}
\tau_{i,k}=\tau_0-\alpha_{\mathrm{RTT}}\mathrm{RTT}_{i,k}+\beta_{\mathrm{BW}}\mathrm{BW}_{i,k}-\gamma_{\mathrm{hist}}\widehat{Q}_{i,k},
\label{eq:dy_1}
\end{align}
Here $\tau_0$ denotes the base fallback threshold level obtained from the initialization procedure described in Appendix A, and serves as the reference point adjusted by the observed network state. $\widehat{Q}_{i,k}$ denotes a running estimate of $Q_{i,k}$.

\begin{figure}
	\centering
	\includegraphics[width=0.495\textwidth]{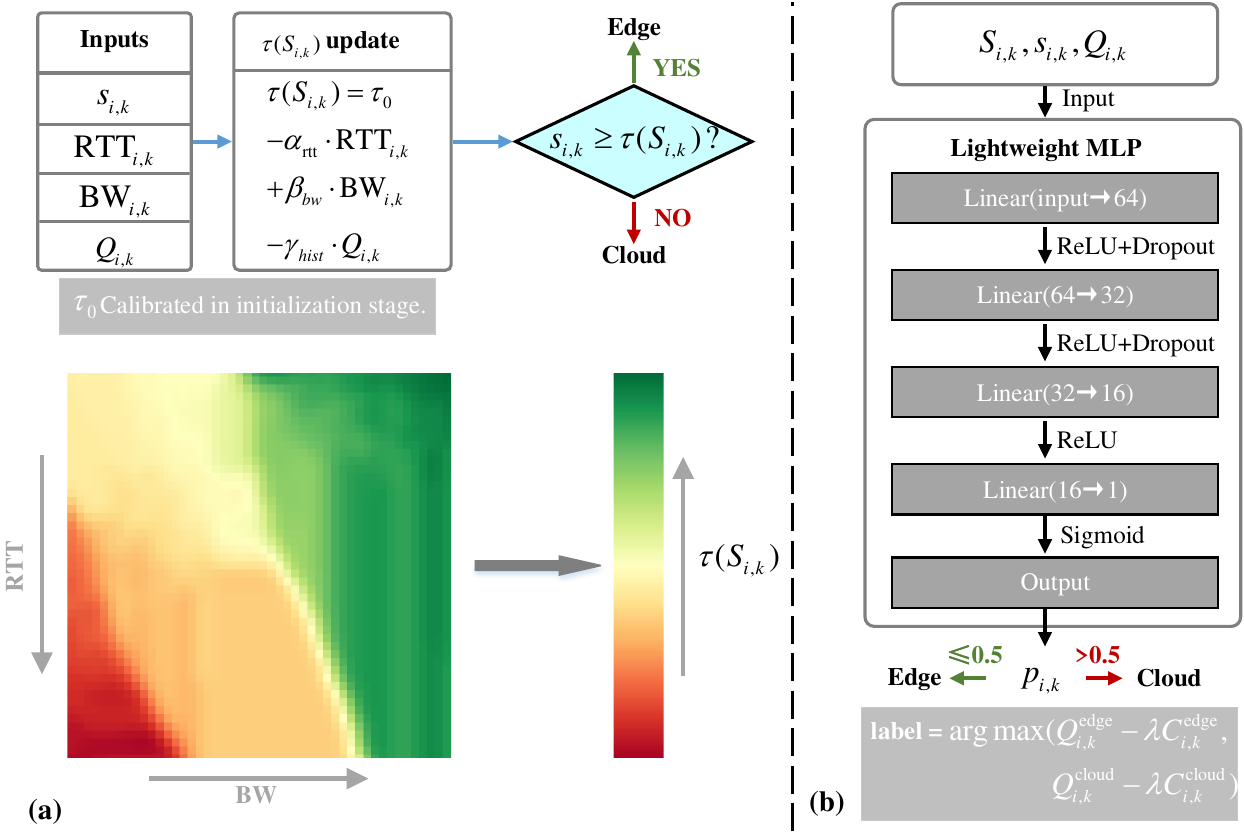}
	\caption{Runtime routing modules and adaptive fallback thresholding. (a) Dynamic fallback threshold controller $\tau_{i,k}=f_{\phi}(S_{i,k},\widehat{Q}_{i,k})$, initialized from the offline-calibrated $\tau_0$ and adjusted by the current network state $(\mathrm{RTT}_{i,k}, \mathrm{BW}_{i,k})$ and recent quality $\widehat{Q}_{i,k}$. The heatmap shows how network state variables shape $\tau_{i,k}$, with $\widehat{Q}_{i,k}$ fixed at its validation mean. (b) Lightweight router $f_{\omega}$ (PolicyNet) takes $[S_{i,k}, s_{i,k}, \widehat{Q}_{i,k}]$ and outputs routing probability $p_{i,k}$; decisions follow $p_{i,k} \le 0.5 \Rightarrow \text{Edge}$, $p_{i,k} > 0.5 \Rightarrow \text{Cloud}$.}
	\label{fig:Routing}
\end{figure}

As shown in Fig.~\ref{fig:Routing}(a), $\tau_{i,k}$ increases with bandwidth and decreases with RTT, consistent with the comparative statics established in Theorem~\ref{thm:monotone}.
\item \emph{PolicyNet}:
It eliminates the explicit fallback threshold and learns a small neural policy that directly outputs the routing decision.
Given the feature tuple $(S_{i,k}, s_{i,k}, \widehat{Q}_{i,k})$, the PolicyNet predicts the probability of routing to the cloud:
\begin{align}
p_{i,k} &= \sigma\!\big(f_{\omega}(S_{i,k}, s_{i,k}, \widehat{Q}_{i,k})\big) \ge 0.5, 
\label{eq:dy_2}
\end{align}
where the neural network $f_{\omega}(\cdot)$ is a compact MLP ($4$ layers, $\approx 2,900$ parameters), ensuring it can be deployed efficiently on the edge device.
The neural network $f_{\omega}(\cdot)$ can be trained following a standard logistic objective,
\begin{align}
\mathcal{L}_{\text{policy}}(\omega)
= \mathbb{E}\!\left[-y_{i,k}\log p_{i,k}-(1-y_{i,k})\log(1-p_{i,k})\right], 
\end{align}
where the training label $y_{i,k}
= \operatorname*{arg\,max}\big(Q^{\mathrm{edge}}_{i,k}-\lambda C^{\mathrm{edge}}_{i,k},
\, Q^{\mathrm{cloud}}_{i,k}-\lambda C^{\mathrm{cloud}}_{i,k}\big)$.
\end{itemize}
\subsection{Online Self-Improvement}
After initialization, the edge agent continues to refine itself (i.e., $\pi_\theta$), contingent on the unified score signal $s_{i,k}$ and shared cached edge–cloud pairs collected during deployment\footnote{We leave the details of caching setup in Appendix \ref{sec:log}.}. 
At step $(i,k)$, the router score acts as the immediate reward by 
\begin{align}
r_{i,k} = s_{i,k} = g_\psi(x_i, H_{i,k}, u_{i,k}). 
\label{eq:r=s}
\end{align}
The goal is to maximize the expected discounted return in 
\begin{align}
\max_{\theta}\;
\mathbb{E}_{\pi_\theta}\!\left[\sum_{k}\gamma^{k} r_{i,k}\right],
\qquad 0 < \gamma \le 1. 
\label{eq:reward_target}
\end{align}
The discount factor $\gamma$ controls how strongly future rewards affect current updates.

Commonly, we follow
\begin{align}
\textbf{(RL-KL)}\quad
\widetilde{\pi}_{t+1}
&=\arg\max_{\pi}\;
\underbrace{\mathbb{E}_{(s,a)\sim d_{\pi_t}}\!\big[A_{\pi_t}(s,a)\,\log \pi(a|s)\big]}_{\text{policy improvement}} \notag \\
&\;-\;\eta_t\,\mathrm{KL}\!\left(\pi\,\|\,\pi_t\right).
\end{align}
and apply a clipped PPO update \cite{Schulman2017PPO} by optimizing:
\begin{align}
\mathcal{L}_{\mathrm{RL}}(\theta)
&= \mathbb{E}\!\Big[\min(\rho_{i,k} A_{i,k},
\,\mathrm{clip}(\rho_{i,k}, 1 - \epsilon, 1 + \epsilon) A_{i,k})\Big]\notag \\
&- \beta_{\mathrm{KL}} \mathrm{KL}\!\big(\pi_\theta \Vert \pi_{\mathrm{old}}\big),
\label{eq:PPO}
\end{align}
where
\begin{align}
\rho_{i,k} = 
\frac{\pi_\theta(u_{i,k}\mid x_i,H_{i,k})}
{\pi_{\mathrm{old}}(u_{i,k}\mid x_i,H_{i,k})}, \notag
\end{align}
$A_{i,k}$ is the advantage estimate from rewards, $\epsilon$ indicates the clipping range, and $\beta_{\mathrm{KL}}$ is the KL penalty coefficient controlling the update size. 

On the other hand, to maintain structured outputs, every $M$ PPO updates are followed by a short SFT anchoring step:
\begin{align}
\textbf{(SFT-CE)}\quad
\pi_{t+1}
=\arg\min_{\pi}\;
\mathbb{E}_{s\sim \mathcal{D}_{\text{SFT}}}
\Big[
\underbrace{
H\big(\pi_{\text{SFT}}(\cdot|s),\,\pi(\cdot|s)\big)
}_{=\;\mathrm{KL}(\pi_{\text{SFT}}\,\|\,\pi)+\text{const}}
\Big].
\label{eq:SFT_a}
\end{align}
where $\pi_{\mathrm{SFT}}$ is the frozen supervised model.
This projection keeps the tool schema correct while allowing the edge LLM policy to improve.
Over time, $\pi_\theta$ learns to approach the cloud's performance with reduced cost.


The first stage (RL-KL) performs a KL-regularized trust-region update that delivers stable policy improvement for a regularized MDP \cite{geist2019theory}. Meanwhile, the second stage (SFT-CE) applies an SFT-anchored cross-entropy projection that pulls the policy back toward the SFT manifold to preserve formatting/tool-calling habits \cite{banerjee2005clustering}. Jointly, the alternation can be viewed as block-coordinate ascent on a composite regularized objective:
\begin{align}
\begin{aligned}
J_{\eta,\mu}(\pi;\pi_t,\pi_{\text{SFT}})
&= \mathbb{E}_{\,s\sim d_{\pi_t},\,a\sim \pi(\cdot|s)}[A(s,a)] \\
&\quad-\;\eta\,\mathbb{E}_{\,s\sim d_{\pi_t}}[\mathrm{KL}(\pi(\cdot|s)\,\|\,\pi_t(\cdot|s))] \\
&\quad-\;\mu\,\mathbb{E}_{\,s\sim d_{\pi_t}}[\mathrm{KL}(\pi_{\text{SFT}}(\cdot|s)\,\|\,\pi(\cdot|s))].
\end{aligned}
\label{eq:rl_obj}
\end{align}
Here $A(s, a)$ is the same advantage estimator as defined in Eq.~(\ref{eq:PPO}), and $d_{\pi_t}$ denotes the discounted state distribution under the current policy.


Similar to the initialization method in Appendix \ref{sec:initialization}, the reward model is incrementally updated using the same pairwise ranking loss with schema penalty as Eq.~(\ref{eq:RM_train}).
Fine-tuning is performed in small batches with early stopping to prevent reward drift. This incremental refresh keeps $g_\psi$ aligned with the evolving edge LLM policy $\pi_\theta$ and the time-varying network state $S$, ensuring that the router signal $s_{i,k}$ remains consistent and informative.

Finally, we summarize the whole procedure for the cloud-edge synergy in Algorithm \ref{alg:routing_online}.
\begin{algorithm}[t]
\caption{Network-Aware Routing and Online Self-Improvement}
\label{alg:routing_online}
\textbf{Inputs:} dataset $\mathcal{D}$, schema $\mathcal{S}$, trade-off $\lambda$, link logs, \\
\hspace*{11mm} edge LLM policy $\pi_{\mathrm{SFT}}$, reward model $g_{\psi}$, horizon $M$.\\
\textbf{Outputs:} updated $\pi_{\theta}$, $\tau(\cdot)$ or $f_{\omega}$, refreshed $g_{\psi}$.
\vspace{3pt}
\hrule
\vspace{3pt}

\textbf{Offline Initialization}
\begin{algorithmic}[1]
\State Obtain $\pi_{\mathrm{SFT}}$ by Eq.~(\ref{eq:SFT}) 
\State Obtain $g_{\psi}$ by Eq.~(\ref{eq:RM_train})
\State Obtain $\tau_0$ by Eq.~(\ref{eq:initial_tau})
\State Set $\pi_{\theta}\!\leftarrow\!\pi_{\mathrm{SFT}}$
\end{algorithmic}

\vspace{4pt}
\hrule
\vspace{4pt}

\textbf{Online Routing and Self-Improvement}
\begin{algorithmic}[1]
\For{each task $i$}
  \State $k\!\leftarrow\!1$, context $H_{i,1}$
  \While{task $i$ not finished}
     \State $u_{i,k}\!\sim\!\pi_{\theta}(\cdot\!\mid\!x_i,H_{i,k})$
     \State Obtain $s_{i,k}=g_{\psi}(x_i,H_{i,k},u_{i,k})$ by Eq.~(\ref{eq:reward})
     \State Obtain $\tau_{i,k}$ by Eq.~(\ref{eq:dy_1}) or Obtain $p_{i,k}$ by Eq.~(\ref{eq:dy_2})
     \If{$s_{i,k}\!\ge\!\tau_{i,k}$ \text{ or } $p_{i,k}\ge0.5$}
         \State Execute $u_{i,k}$ locally
     \Else
         \State Query cloud $\Rightarrow u_{i,k}^{\star}$
         \State Obtain $C_{i,k}$ by Eq.~(\ref{eq:cloud_cost})
         \State Store $(x_i,H_{i,k},u_{i,k},u_{i,k}^{\star},s_{i,k},S_{i,k})$ in $\mathcal{B}_{\mathrm{RM}}$
     \EndIf
    
     \State Compute $Q_{i,k}$ by Eq.~(\ref{eq:quality})
     \State Compute $L_{i,k}$ by Eq.~(\ref{eq:latency})
     \State Update history $\widehat Q_{i,k}\leftarrow(1-\beta)\widehat Q_{i,k-1}+\beta Q_{i,k}$
     \State Update $H_{i,k+1}$ by Eq.~(\ref{eq:update})
     \State Add $(x_i,H_{i,k},u_{i,k},r_{i,k}{=}s_{i,k})$ to $\mathcal{B}_{\mathrm{RL}}$
     \State $k\!\leftarrow\!k{+}1$
  \EndWhile
\EndFor
\Statex
\For{each idle window}
    \State Update $\pi_{\theta}$ by PPO using Eq.~(\ref{eq:PPO}) on $\mathcal{B}_{\mathrm{RL}}$
    \If{every $M$ steps}
        \State Align $\pi_{\theta}$ to $\pi_{\mathrm{SFT}}$ by Eq.~(\ref{eq:SFT_a})
    \EndIf
    \State Update $g_{\psi}$ using incremental RM training (Eq.~(\ref{eq:reward})) on $\mathcal{B}_{\mathrm{RM}}$
    \State Recalibrate $\tau_{i,k}$ or $f_{\omega}$ using latest $(S_{i,k},\widehat Q_{i,k})$
\EndFor
\State \Return $\pi_{\theta},\ \tau(\cdot)\text{ or }f_{\omega},\ g_{\psi}$
\end{algorithmic}
\end{algorithm}

\section{Performance Evaluation}
\label{sec:evaluation}

\subsection{Experimental Setup}

We consider a NetGPT scenario with a heterogeneous tool-calling stack where the edge runs \texttt{DeepSeek-R1-Distill-Qwen-7B} (FP32, non-quantized) and the cloud uses \texttt{DeepSeek-V3.2-Exp}. 
Both models are required to emit structured tool calls with a tool name, arguments, and a brief thought trace, but we use different prompting schemes at the edge and in the cloud. At the edge, we use a compact schema that includes the current query, tool names from previous steps, and the list of currently available tools with their argument slots. In the cloud, we use a ReAct-style prompt \cite{Yao2023ReAct} with richer task instructions and full tool and argument descriptions to enforce the desired reasoning pattern and output format.

\begin{figure}
	\centering
	\includegraphics[width=0.495\textwidth]{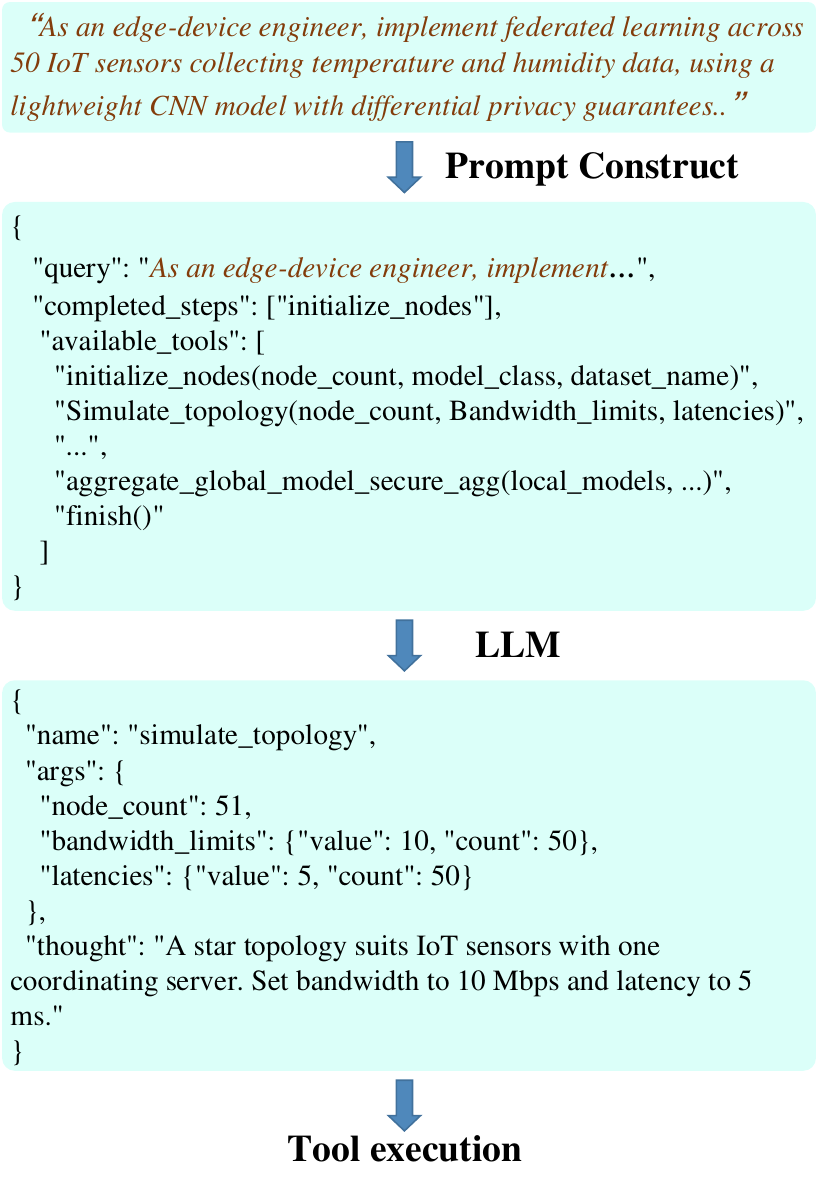}
	\caption{Example of structured text flow across stages.}
	\label{fig:json_example}
\end{figure}

At the edge, we optimize the local LLM with PPO \cite{Schulman2017PPO}, using the unified router score as a scalar reward. The score $s_{i,k}$ is produced by a lightweight reward model $g_{\psi}$ instantiated on \emph{Qwen2.5-1.5B-Instruct}, fine-tuned on comparison judgments favor cloud-level tool-calling quality and to penalize schema or format violations. During deployment, $g_{\psi}$ serves as the scoring module of the router, while the routing policy leverages $s_{i,k}$, the current network state $S_{i,k}$, and the running quality estimate $\widehat{Q}_{i,k}$ to decide whether to continue execution on the edge or offload the next step to the cloud.

Workloads are drawn from a private corpus of tool-calling tasks that matches our target skill mix. From this corpus\footnote{As described in Section \ref{sec:method} and Appendix \ref{sec:log}, only cloud-offloaded steps are logged and reused.}, we sample $8,000$ examples for SFT-based initialization of the edge LLM for stabilizing schema-compliant tool-calling, $2,000$ examples to train the reward model, and $8,000$ tasks for inference experiments and simulated continual improvement. Task difficulty scales with input length and tool-calling complexity: each instance exposes $10-20$ candidate tools and includes between $0$ and $8$ previously executed steps. During evaluation, requests are processed step-wise and metrics are aggregated at the task level.
To couple routing decisions with network conditions, each request samples a throughput $\mathrm{BW}$ (Mbps) and a round-trip time $\mathrm{RTT}$ (ms) from one of three regimes:
\begin{itemize}
    \item GOOD: $\mathrm{BW} \in [120,200]$, $\mathrm{RTT} \in [20,40]$;
    \item MID: $\mathrm{BW} \in [30,80]$, $\mathrm{RTT} \in [40,80]$;
    \item BAD: $\mathrm{BW} \in [5,15]$, $\mathrm{RTT} \in [80,130]$.
\end{itemize}
This design aligns our three buckets with established QoS envelopes: according to the ITU-T Y.1541 \cite{ITU-T-Y1541-2011} end-to-end objectives, $20-40$ ms RTT reflects metro/near-region paths, while $40-80$ ms RTT corresponds to typical regional paths, and $80-130$ ms RTT indicates long-haul yet standards-consistent paths. For bandwidth, we sample a link rate in Mbps as a controlled variable but interpret its effect through the TCP bandwidth-delay-product (BDP) rationale (RFC 6349) \cite{rfc6349}, i.e., GOOD/MID/BAD correspond to low-/moderate-/high-BDP regimes that govern achievable throughput and window sizing. We model time-varying network state via a one-step Gauss-Markov update $S_{i,k} = S_{i,k-1} + \varepsilon_{i,k}$, where $\varepsilon_{i,k} \sim \mathcal N(0,\Sigma^2)$. For time-correlated links, such a method underpins widely used network/wireless mobility abstractions \cite{GHANDOURHAIDAR2012601}.

\begin{figure*}[t]
  \centering
  \includegraphics[width=0.98\textwidth]{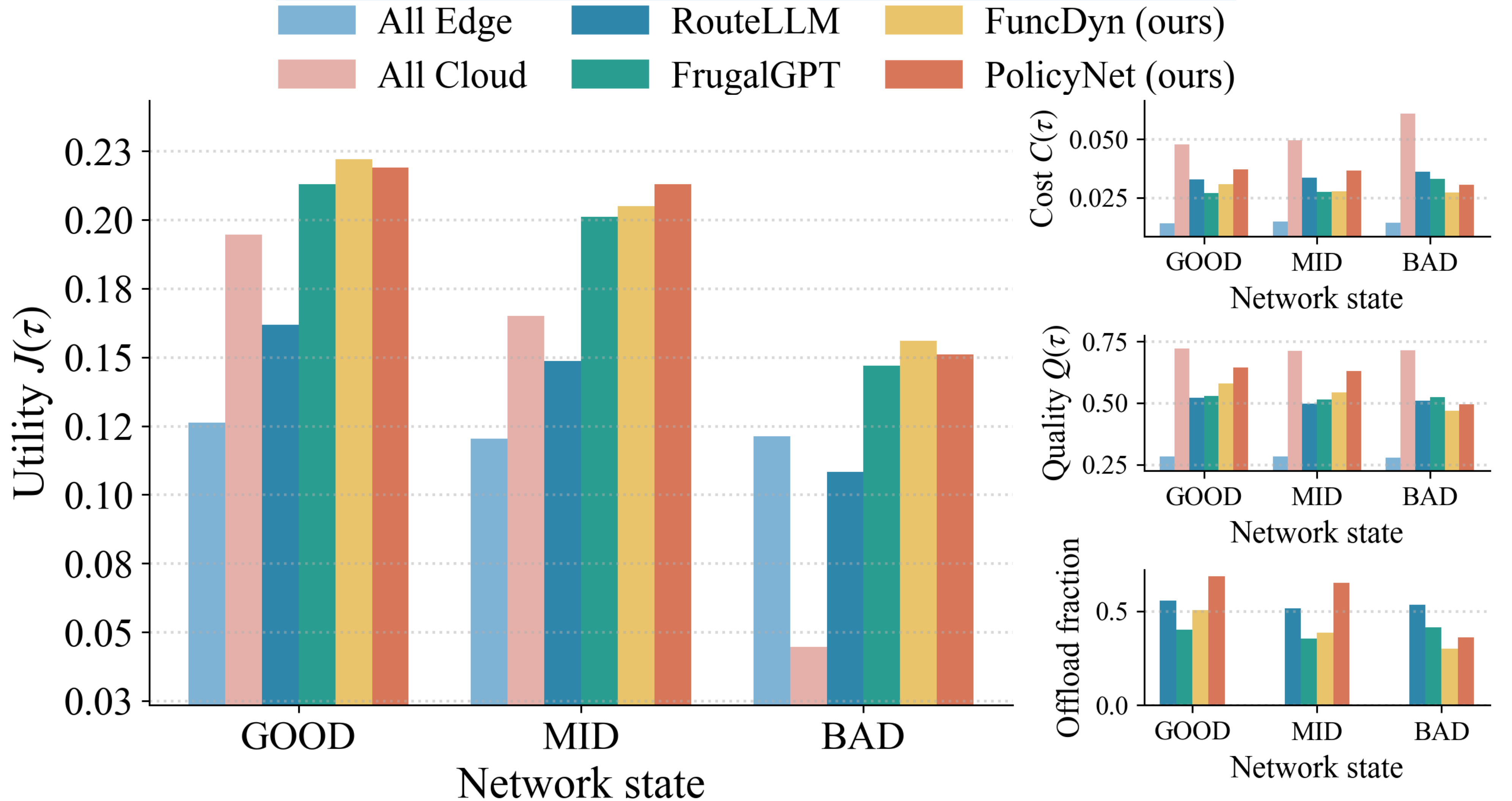}
  \caption{Comparison among All Edge, All Cloud, RouteLLM \cite{ong-etal-2025-routellm}, FrugalGPT \cite{chen-etal-2024-frugalgpt}, and the dynamic controllers (FuncDyn, PolicyNet) under GOOD/MID/BAD links.}
  \label{fig:controller-comparison}
\end{figure*}

\begin{figure*}[t]
  \centering
  \includegraphics[width=0.98\textwidth]{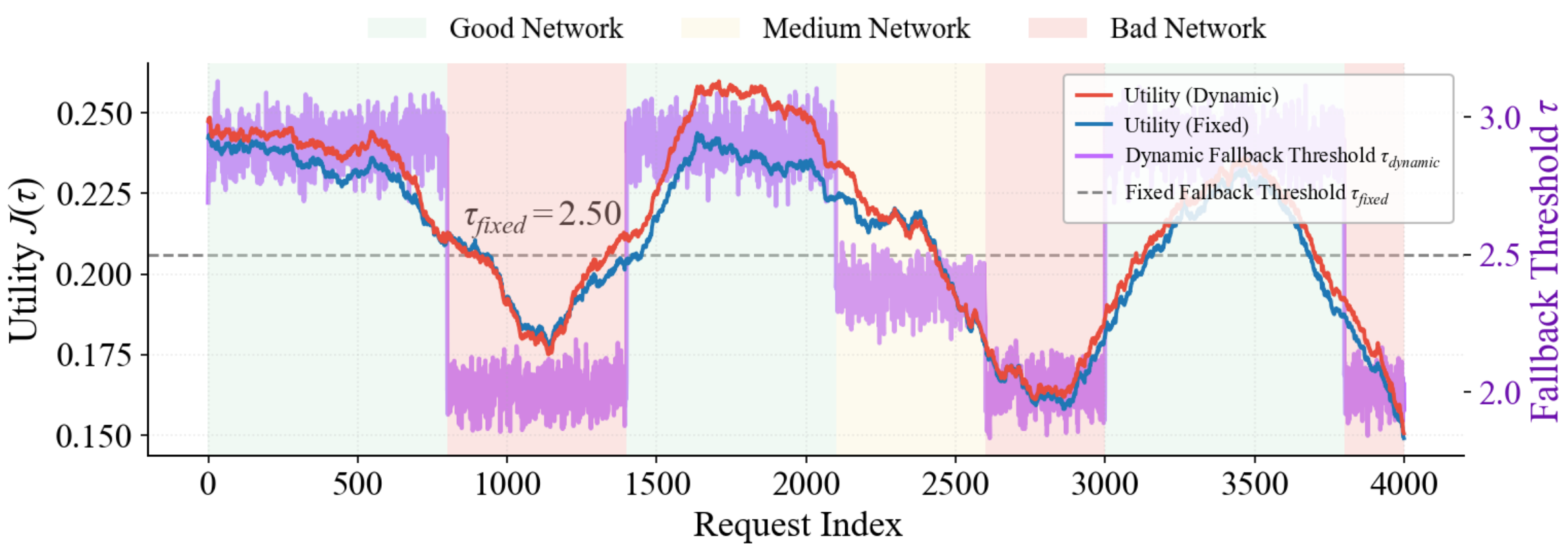}
  \caption{Dynamic vs.\ fixed fallback threshold under time-varying links.
  }
  \label{fig:dynamic-threshold}
\end{figure*}

We study two policies: \emph{FuncDyn} constructs a rule-based dynamic fallback threshold $\tau_{i,k}$ from the normalized round-trip time $\mathrm{RTT}_{i,k}$, bandwidth $\mathrm{BW}_{i,k}$, and a historical quality $\widehat{Q}_{i,k}$, and offloads when $s_{i,k} > \tau(\mathrm{RTT}_{i,k}, \mathrm{BW}_{i,k}, \widehat{Q}_{i,k})$. \emph{PolicyNet} is a compact four-layer MLP that takes as input $x_{i,k} = \bigl(S_{i,k}, s_{i,k}, \widehat{Q}_{i,k}\bigr)$ and outputs a binary cloud-edge routing decision. For comparison, we evaluate two routing baselines. \emph{RouteLLM}~\cite{ong-etal-2025-routellm} is a learned one-shot router that, given an input query, makes a single binary choice—answer on the edge or offload to a stronger model—and keeps this choice fixed throughout multi-step tool calling. \emph{FrugalGPT}~\cite{chen-etal-2024-frugalgpt} implements a fixed fallback threshold cascade, where requests are passed from smaller to larger models according to a static acceptance rule.
All training and batched inference run on $8\times$ NVIDIA A800 (80 GB) GPUs. 

Metrics include \emph{utility} $J$, \emph{composite quality} $Q$, \emph{total cost} $C$, and \emph{offload rate} (fraction of requests routed to the cloud). Unless otherwise stated, all metrics are reported as averages over the $8,000$ evaluation tasks. 
We record latency components but report normalized costs so that $Q$ and $C$ remain commensurate under the chosen $\lambda$. In addition, for frontiers and sensitivity analyses, we sweep $\tau$ over a dense grid; for long traces, we interleave regimes to emulate time-varying links.

\subsection{Adaptive Routing under Time-Varying Links}

We compare our dynamic controllers (FuncDyn and PolicyNet) with RouteLLM, FrugalGPT, and two additional baselines (All Edge and All Cloud) across GOOD/MID/BAD regimes, and Fig.~\ref{fig:controller-comparison} presents the corresponding results. It shows FuncDyn yields the highest $J$ while keeping the offload rate moderate, reflecting effective state-aware acceptance. PolicyNet attains the highest $Q$ with higher offload and cost. In contrast, FrugalGPT ranks between the dynamic controllers and RouteLLM across regimes. Under regime switches, RouteLLM yields the lowest utility $J$, while our dynamic controllers remain top-performing. Together with the baselines All Edge and All Cloud, the comparison indicates that coupling the router with both execution agents yields a larger utility $J$ across GOOD/MID/BAD. 

We evaluate routing under interleaved network conditions by tracking the per-request utility $(J=Q-\lambda C)$ over a long trace. Fig.~\ref{fig:dynamic-threshold} plots the step-level $J$ together with its windowed mean for a dynamic, network-aware fallback threshold and for a fixed fallback threshold representative of a static acceptance policy. Across regime switches, the dynamic fallback threshold maintains a consistently higher average $J$: when the link deteriorates, the controller lowers the acceptance boundary to curb offloading cost; as the link recovers, the boundary lifts to exploit cloud quality gains. The utility series shows smooth piecewise transitions without overshoot or oscillations, indicating stable adaptation under non-stationary links.

\subsection{Quality-Cost Frontier and Network Sensitivity}

Fig.~\ref{fig:frontier} reports the attainable quality-cost frontiers by sweeping the fallback threshold $\tau$ under three fixed network conditions (GOOD, MID, BAD). The curves are smooth, monotonic in cost, and show diminishing marginal quality gains as cost increases, empirically supporting the decreasing-benefit Assumption~\ref{ass:cost} used in our analysis. Consistent with Eq.~(\ref{eq:frontier-network}), at the same fallback threshold, the frontier slope scales as $1/\kappa(S)$, so degraded links (larger $\kappa$) flatten the curve and shift it rightward/downward relative to better network states. Across regimes, the frontier shifts rightward (and slightly downward) from GOOD to BAD, indicating that worse links require more cost to reach the same quality and deliver lower quality at the same cost—consistent with our theory that degraded networks raise the effective offloading cost and push decisions toward edge execution. 

\begin{figure}[t]
    \centering
    \includegraphics[width=0.95\linewidth]{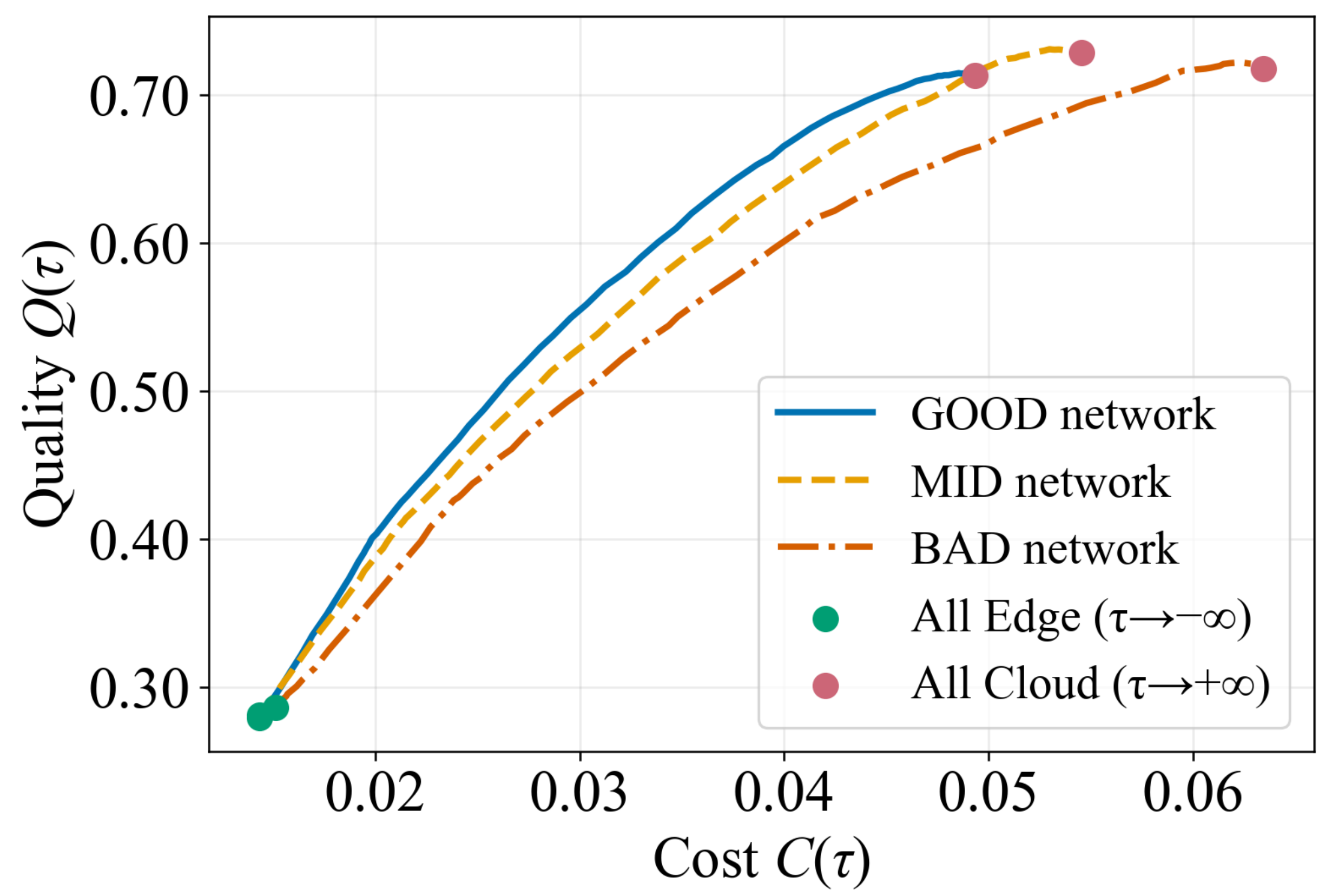}
    \caption{Quality-Cost frontiers across fixed network regimes.}
    \label{fig:frontier}
\end{figure}
\begin{figure}[t]
  \centering
  \includegraphics[width=0.95\linewidth]{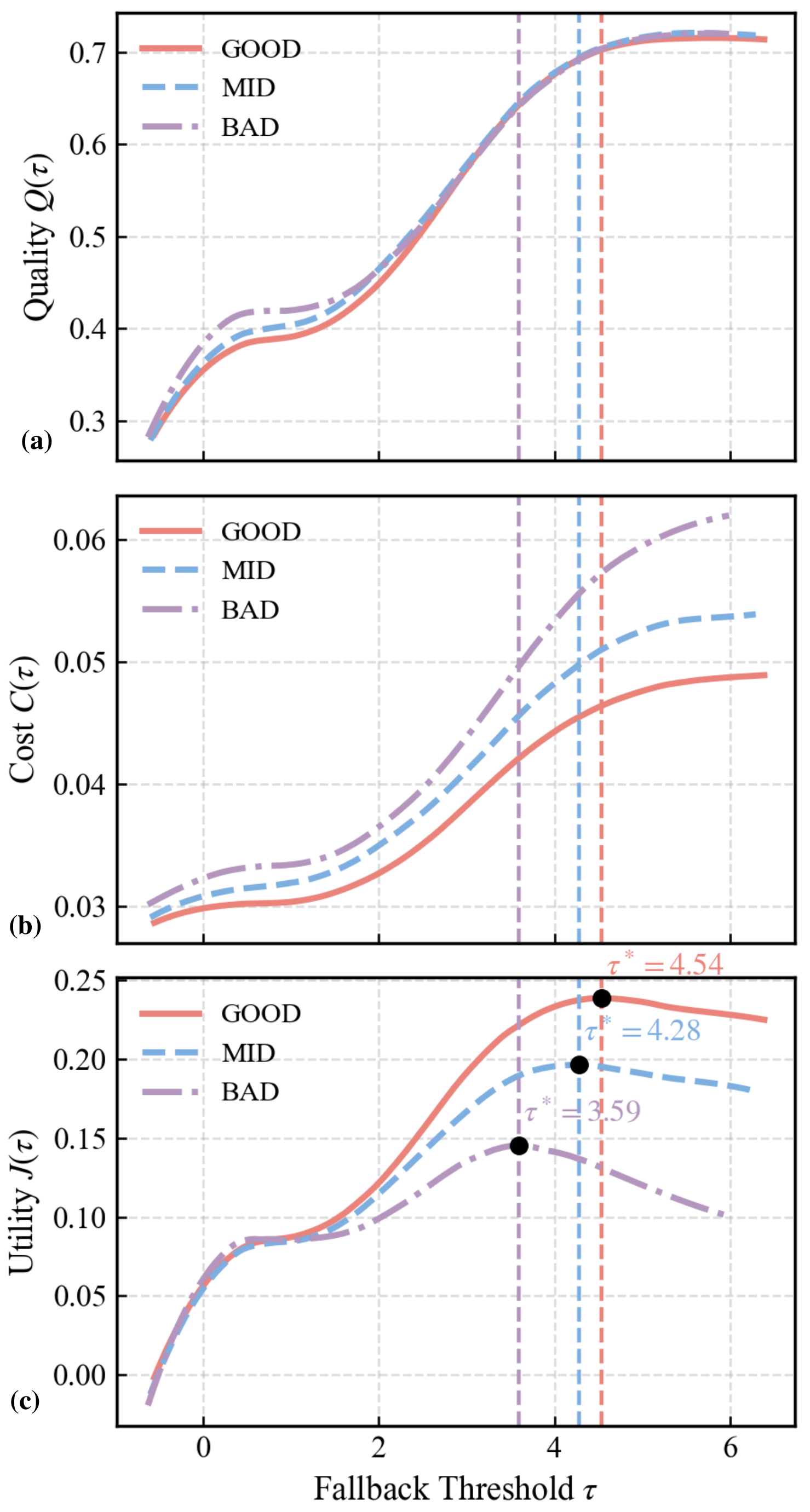}
  \caption{Network sensitivity of the score-threshold policy. 
  (a)~Composite quality~$Q(\tau)$, 
  (b)~total cost~$C(\tau)$, and 
  (c)~utility~$J(\tau)=Q-\lambda C$ 
  under the three network states.}
  \label{fig:sensitivity}
\end{figure}
\begin{figure}[t]
  \centering
  \includegraphics[width=0.98\linewidth]{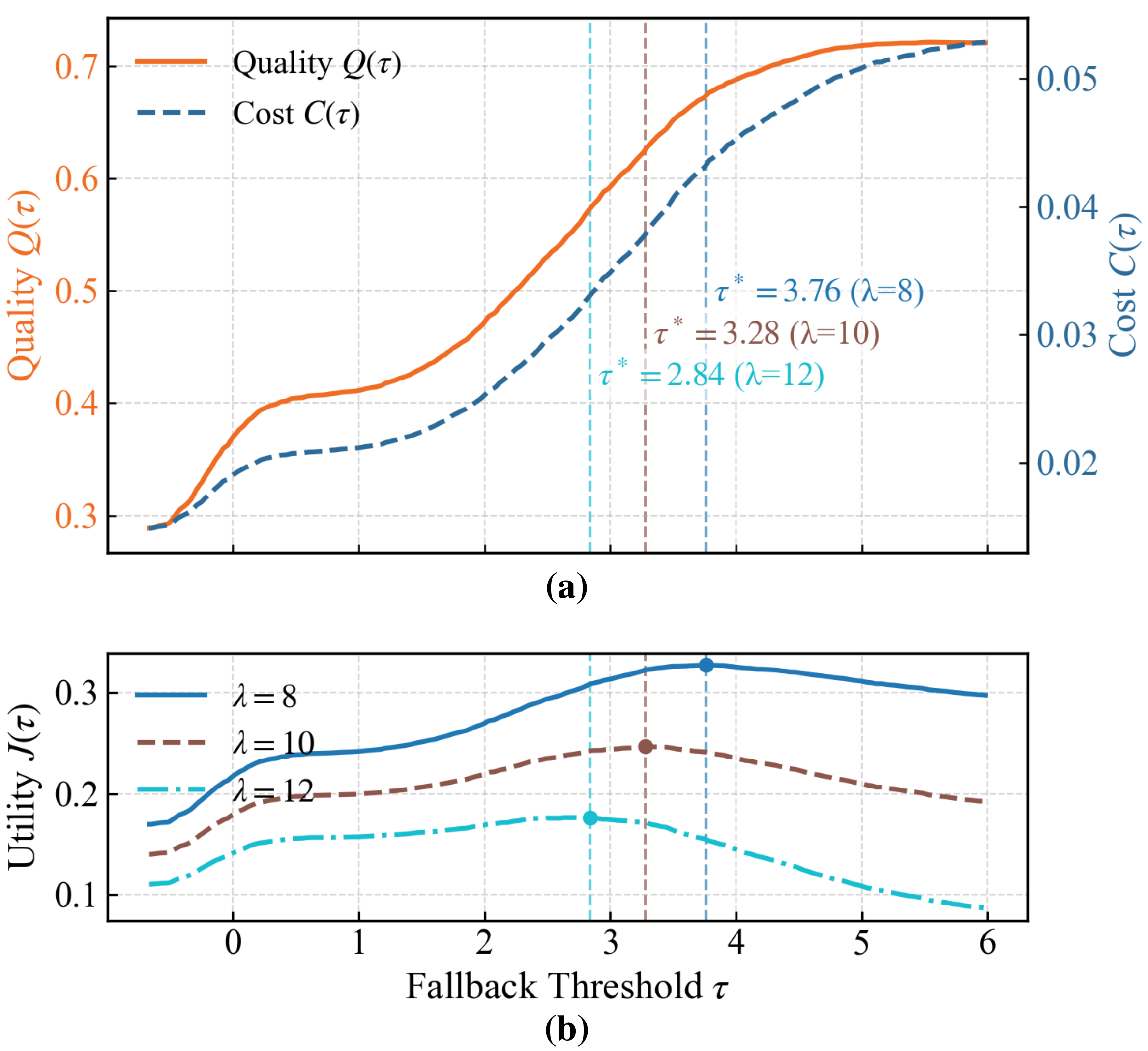}
  \caption{Fallback threshold scan: $Q(\tau)$, $C(\tau)$, and $J(\tau)$ for $\lambda\in\{8,10,12\}$; vertical lines indicate $\tau^*$ for each $\lambda$.}
  \label{fig:threshold-scan}
\end{figure}

Fig.~\ref{fig:sensitivity} shows the sensitivity of the score-threshold policy to network conditions. We sweep the fallback threshold $\tau$ under three network regimes while fixing the trade-off weight at $\lambda = 10$, and plot the resulting quality, cost, and utility curves. As RTT increases and bandwidth decreases from GOOD to MID to BAD, both curves shift, and the utility-maximizing fallback threshold moves leftward. The empirical optima $\tau^{*}_{\text{GOOD}}\approx 4.54$, $\tau^{*}_{\text{MID}}\approx 4.28$, and $\tau^{*}_{\text{BAD}}\approx 3.59$ follow the monotone comparative-statics pattern predicted by our analysis: deteriorating links impose higher communication costs, which discourage cloud offloading and favor more permissive local acceptance at the edge. These results confirm that the proposed score-threshold policy adapts smoothly and predictably as link quality worsens.

For completeness, Fig.~\ref{fig:threshold-scan} further decomposes the fallback threshold scan by plotting $Q(\tau)$ and $C(\tau)$ together with overlaid utility curves $J(\tau)$ for different values of $\lambda$.
Across all settings, the $J(\tau)$ profiles remain unimodal, and the maximizing fallback threshold shifts leftward as $\lambda$ increases, reflecting heightened sensitivity to cost.

\subsection{Effectiveness of PPO-based Learning}
We assess the effectiveness of on-device policy optimization with the SFT anchor to preserve schema-correct tool-calling. Fig.~\ref{fig:ppo-training} contrasts training with and without the SFT anchor. It can be observed that the involvement of SFT leads to monotonically increased reward with lower variance and a steadily declining offload rate, indicating reliable improvement without structural drift. These results corroborate that interleaved SFT updates stabilize RL by constraining updates around a schema-preserving reference. Fig.~\ref{fig:ppo-frontier} shows the quality-cost frontier before and after PPO training. After training, the frontier strictly dominates the pre-training curve across a range of $\lambda$ values, suggesting that the utility gains are robust and not tied to a single trade-off weight.
From a multi-agent perspective, the router, the edge LLM policy, and the cloud model operate in a network feedback loop. The improvements in Fig.~\ref{fig:ppo-training} and the shift of the quality-cost frontier in Fig.~\ref{fig:ppo-frontier} are consistent with coordinated updates driven by online feedback rather than offline tuning alone. 

\begin{figure}[t]
  \centering
  \includegraphics[width=0.95\linewidth]{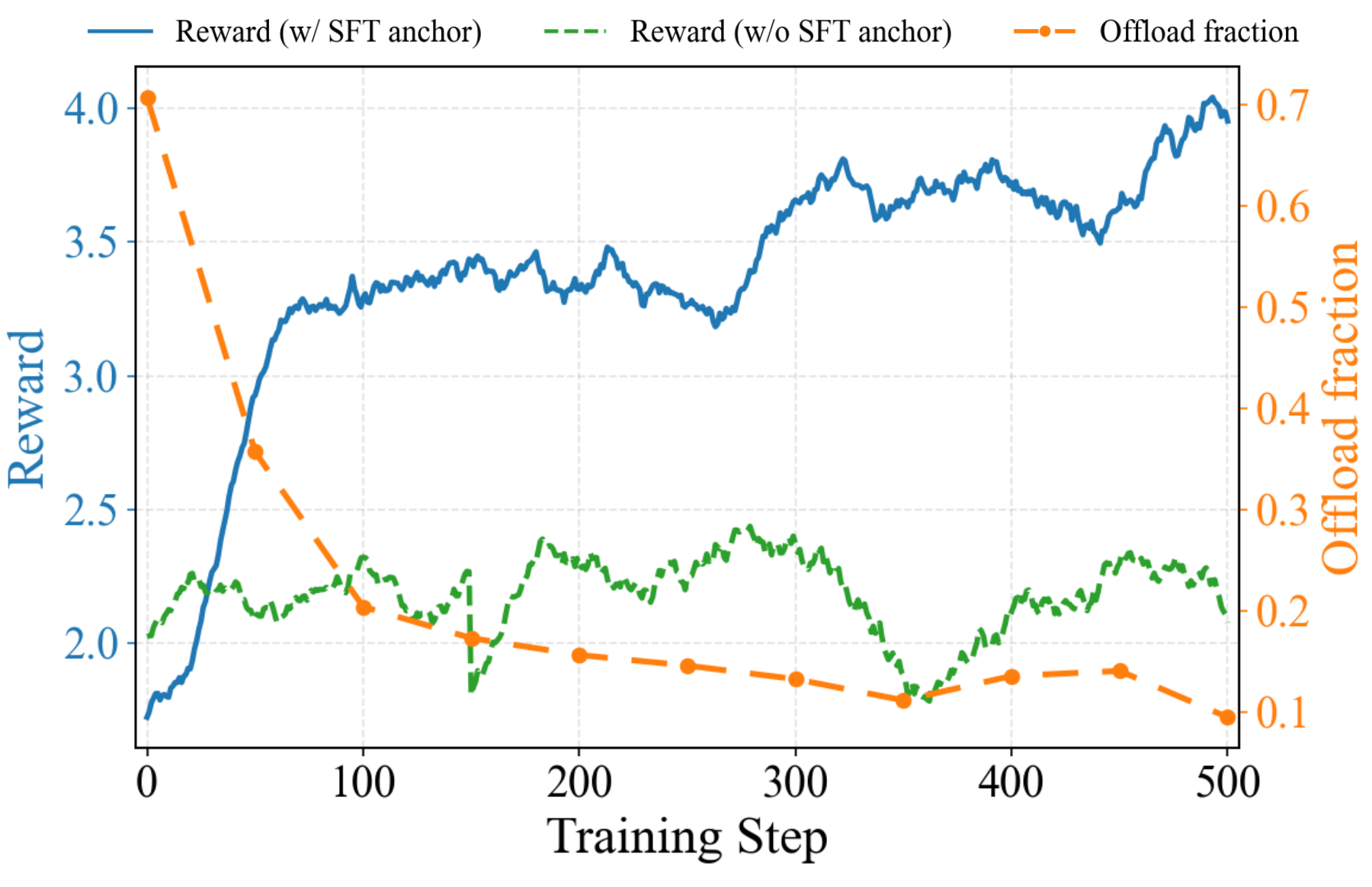}
  \caption{PPO with vs. without SFT anchoring. Router-reward trends and offload rate over training steps.}
  \label{fig:ppo-training}
\end{figure}

\begin{figure}[t]
  \centering
  \includegraphics[width=0.9\linewidth]{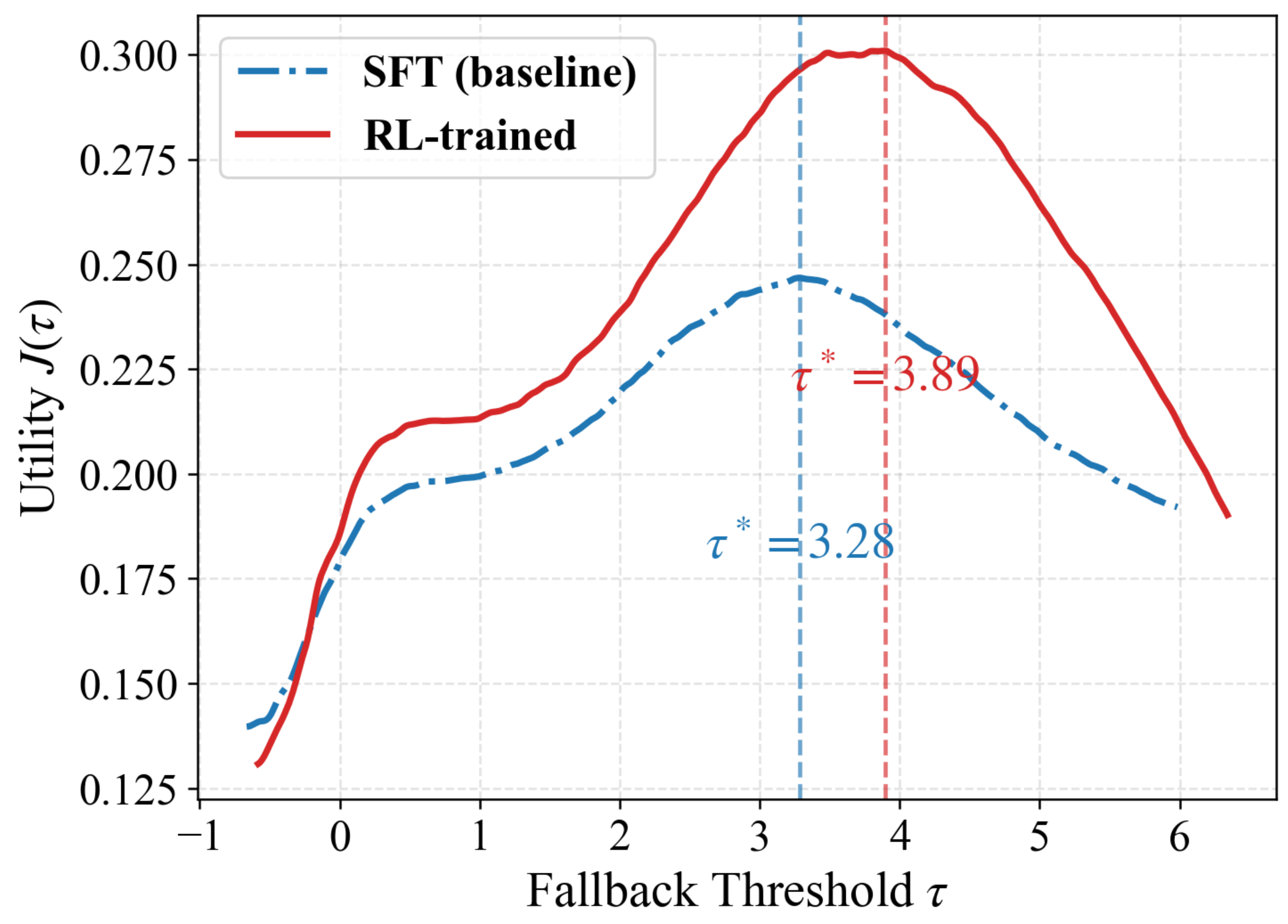}
  \caption{Utility comparison before and after RL.}
  \label{fig:ppo-frontier}
\end{figure}
\begin{figure}[t]
  \centering
  \includegraphics[width=0.9\linewidth]{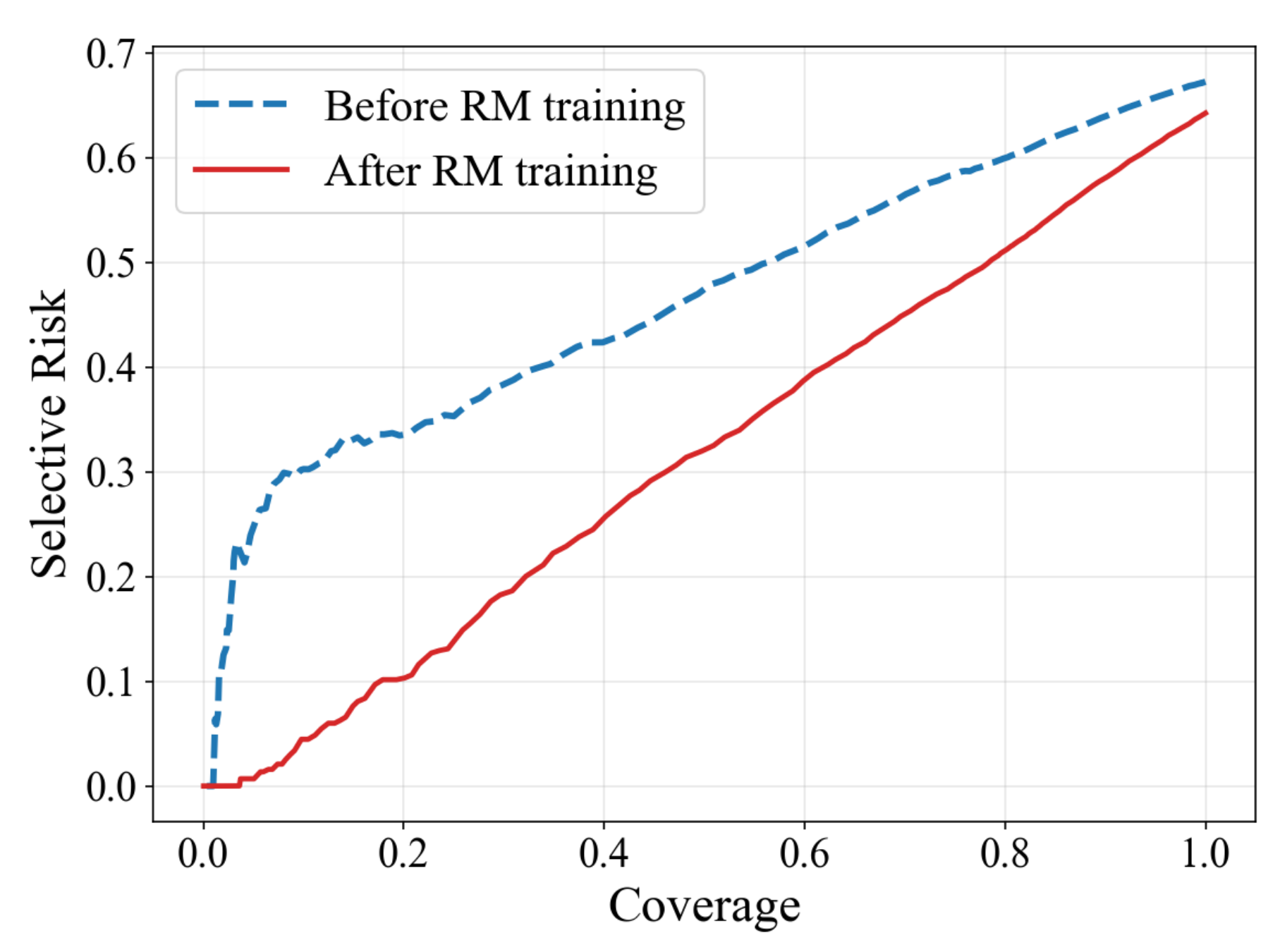}
  \caption{RM Risk-Coverage curves before vs.\ after training. Post-training curves reduce risk at fixed coverage, indicating more accurate router scoring near the decision boundary.}
  \label{fig:RM_improve}
\end{figure}

To quantify the router's scoring quality, we examine risk-coverage curves \cite{el-yaniv2010selective} before and after RM training. Notably, for a fallback threshold $\tau$, the \emph{coverage} is the fraction of requests accepted at the edge, while \emph{selective risk} is the fraction of these accepted cases for which the cloud would have produced a better result (i.e., should-have-offloaded errors). As shown in Fig.~\ref{fig:RM_improve}, the post-training curve lies uniformly below the pre-training curve, especially in the medium-to-high coverage range, demonstrating sharper discrimination near the decision boundary. 

\section{Conclusion and Future Work}
\label{sec:conclusion}
We have introduced a network-aware LLM routing framework that unifies schema-constrained tool planning at the edge with a router-as-reward loop for continual improvement. We have cast offloading as state-dependent thresholding of a unified RM score, aligning routing, cost, and quality through a single, interpretable fallback threshold adapted to the network state. We have implemented two deployable variants—FuncDyn and a lightweight PolicyNet—both adapting online to time-varying links with minimal overhead. We have established the existence and uniqueness of the optimal fallback threshold and derived monotone comparative statics with respect to bandwidth and RTT. Furthermore, we have validated the effectiveness and superiority across extensive simulations.
Subsequent research will investigate multi-edge/multi-cloud orchestration with cooperative routers and a shared reward model, robustness of reward signals under distribution shift, and hardware-in-the-loop evaluations on real network traces.

\begin{appendices}
\begin{lstlisting}[float=t,language=json,caption={SFT sample with structured tool call.}]
{
  "query": "Optimize a federated recommendation system for news articles across mobile devices.",
  "completed_steps": [
    "initialize_nodes"
  ],
  "thought": "After initializing nodes, we need to simulate the network topology. Since this involves mobile devices, we should account for variable bandwidth and latency. This will help model realistic communication conditions during federated training.",
  "output": {
    "name": "simulate_topology",
    "args": {
      "node_count": 5,
      "bandwidth_limits": "10-100 Mbps",
      "latencies": "50-200ms"
    }
  },
  "available_tools": [
    "initialize_nodes(node_count, model_class, dataset_name)",
    "simulate_topology(node_count, bandwidth_limits, latencies)",
    "train_local_models(nodes, dataset, epochs, batch_size, learning_rate)",
    "train_local_models_fedprox(nodes, dataset, global_model, mu, epochs, batch_size, learning_rate)",
    "train_local_models_secure_agg(nodes, dataset, epochs, batch_size, learning_rate)",
    "aggregate_global_model(local_models, weights, nodes, dataset, topology)",
    "aggregate_global_model_krum(local_models, num_selected, nodes, dataset, topology)",
    "aggregate_global_model_median(local_models, nodes, dataset, topology)",
    "aggregate_global_model_secure_agg(local_models, nodes, dataset, topology)",
    "finish()"
  ]
}
\end{lstlisting}

\begin{lstlisting}[float=!ht,language=json,caption={Preference pair: chosen vs. rejected sample.}]
{
  "input": {
    "query": "Stanford medical researchers require federated training of a 3D U-Net for brain tumor segmentation across 7 hospitals, with differential privacy epsilon=0.5 guarantees.",
    "completed_steps": []
  },
  "chosen": {
    "action": {
      "name": "initialize_nodes",
      "args": {
        "node_count": 8,
        "model_class": "3D U-Net",
        "dataset_name": "brain_tumor_segmentation"
      }
    },
    "thought": "First initialize the federated learning nodes with the 3D U-Net model and brain tumor segmentation dataset. Next steps will involve setting up the network topology and beginning the federated training process with differential privacy guarantees. Need to ensure the model class and dataset name are correctly specified for the medical imaging task. Initializing 8 nodes (7 hospitals + central server) with 3D U-Net models for brain tumor segmentation."
  },
  "rejected": {
    "action": {
      "name": "initialize_nodes",
      "args": {
        "node_count": 7,
        "model_class": "MedicalImagingModel",
        "dataset_name": "TumorScans"
      }
    },
    "thought": "Starting fresh with this sensitive medical task. We'll initialize 7 nodes (one per hospital) with a specialized medical imaging model. Using a small batch size (16) to handle potentially limited patient data at each site."
  }
}
\end{lstlisting}
\section{Initialization of Edge Policy, RM, and Fallback Threshold}
\label{sec:initialization}
We initialize the edge LLM policy $\pi_\theta$, the reward/router model $g_\psi$ that outputs the score $s_{i,k}$, and a fixed fallback threshold $\tau_0$ for runtime use. 

Particularly, we fit $\pi_\theta$ on traces $(x_i, H_{i,k}, u_{i,k})$ with supervised cross-entropy:
\begin{align}
\mathcal{L}_{\mathrm{SFT}}(\theta)
&=\mathbb{E}\!\left[-\log \pi_\theta(u_{i,k}\mid x_i, H_{i,k})\right].
\label{eq:SFT}
\end{align}
Here $x_i$ is the task input, $H_{i,k}$ is the historical step, and $u_{i,k}=(\texttt{tool},\texttt{args},\texttt{thought})$ is the teacher action that satisfies the schema. 

The reward model outputs a scalar score $
s_{i,k}=g_\psi(x_i, H_{i,k}, u_{i,k})\in\mathbb{R}$ for routing and as the learning reward. We initialize $g_\psi$ with a pairwise ranking loss:
\begin{align}
\mathcal{L}_{\mathrm{RM}}(\psi)
&=\mathbb{E}\!\left[\log\!\big(1+\exp(-(s^{+}_{i,k}-s^{-}_{i,k}))\big)\right]. 
\label{eq:RM_train}
\end{align}
Here $s^{+}_{i,k}$ and $s^{-}_{i,k}$ are the RM scores of a preferred and a non-preferred answer under the same context $(x_i, H_{i,k})$. The preferred answers come from the cloud LLM outputs, while the non-preferred ones are generated by the edge LLM after SFT initialization. This construction lets the RM learn the quality gap between edge and cloud responses under identical conditions.

We select an initial fallback threshold $\tau_0$ on an offline dataset $\mathcal{D}_{\mathrm{init}}$. Each record in $\mathcal{D}_{\mathrm{init}}$ provides the score $s_{i,k}$ and the empirical utilities we would obtain by choosing edge or cloud at that step:
\begin{align}
J^{\mathrm{edge}}_{i,k} &\triangleq Q^{\mathrm{edge}}_{i,k}-\lambda C^{\mathrm{edge}}_{i,k}, \qquad
J^{\mathrm{cloud}}_{i,k} \triangleq Q^{\mathrm{cloud}}_{i,k}-\lambda C^{\mathrm{cloud}}_{i,k}.
\end{align}
Here $Q^{\cdot}_{i,k}\in[0,1]$ and $C^{\cdot}_{i,k}\ge0$ denote the measured quality and the step-level cost (e.g., latency or monetary cost), respectively. 
$\lambda>0$ is the trade-off weight. We then choose $\tau_0$ by empirical utility maximization under the hard fallback threshold rule:
\begin{align}
\tau_0 \label{eq:initial_tau}\\
\in&  \arg\max_{\tau\in\mathbb{R}}
\sum_{(i,k)\in \mathcal{D}_{\mathrm{init}}}
\Big[J^{\mathrm{edge}}_{i,k}\,\mathbb{I}[s_{i,k}\ge \tau] \notag + J^{\mathrm{cloud}}_{i,k}\,\mathbb{I}[s_{i,k}< \tau]\Big],
\end{align}
which picks the fixed fallback threshold that yields the highest average utility on $\mathcal{D}_{\mathrm{init}}$.

Our training datasets are batch-generated with \texttt{GPT-4o} \cite{openai2024gpt4o}, yielding $8,000$ SFT instances for schema-faithful initialization and $2,000$ preference pairs for reward modeling. In addition, we generate $8,000$ tasks for inference experiments and simulated continual improvement. To ensure diversity, each batch draws a prompt \emph{uniformly at random} from a maintained pool containing multiple prompt templates. Every template rigidly specifies the output schema (tool name, arguments, and brief thought trace) to match our tool-calling interface. After generation, we run a two-stage filtering pass: (i) schema validation (JSON/tool-slot conformance and argument type checks), and (ii) quality screening (automatic scoring to eliminate near-duplicates and low-utility samples). The resulting corpora include: (a) \emph{SFT data}, used to initialize schema-faithful tool use for the edge agent; and (b) \emph{preference pairs}, where a \emph{chosen} action (typically the cloud-produced response) is contrasted with a \emph{rejected} action (typically an edge-produced alternative) under the same context, supporting reward modeling and preference optimization.

\section{Caching Setup and On-Policy Logging}
\label{sec:log}
With \texttt{DeepSeek-V3.2-Exp} as the fallback cloud LLM, the cached records serve multiple purposes in the learning pipeline:
\begin{itemize}
  \item Cloud-offloaded traces provide training data for improving the edge LLM through RL.
  \item Near-threshold cloud cases, together with the small set of above-threshold samples intentionally uploaded, constitute a focused dataset for refining the reward model around the decision boundary.
  \item Execution outputs contribute to prompt construction: edge-side prompts append only the executed tool name to \texttt{completed\_steps}, whereas cloud-side prompts additionally include a concise natural-language summary.
\end{itemize}

Therefore, we denote the cached tuple by
\begin{align}
\mathcal{T}_{i,k} \;\triangleq\; (x_i, H_{i,k}, u_{i,k}, u^{\star}_{i,k}, s_{i,k}). \notag
\end{align}
Only steps offloaded to the cloud (\,$d_{i,k}=\textsc{Cloud}$\,) append $\mathcal{T}_{i,k}$ to the local cache $\mathcal{B}_{\mathrm{RM}}$ and $\mathcal{B}_{\mathrm{RL}}$ for later updates; on-device steps retain only $H_{i,k}$.

To enrich the reward model's learning signal, we randomly sample a small fraction of steps whose scores are slightly above the fallback threshold, i.e., $s_{i,k} \gtrsim \tau(S_{i,k})$, and also upload them to the cloud. These near-threshold samples capture marginal decision cases that are most informative for distinguishing good and bad routing choices.
After executing the selected tool, we record a structured output
\begin{align}
o_{i,k} = (t_{i,k},\,\sigma_{i,k}),
\end{align}
where $t_{i,k}$ is the tool identifier and $\sigma_{i,k}$ is a length-bounded summary of the tool output (sufficient for the next-step LLM to interpret the result while keeping context compact).

The incorporation of $o_{i,k}$ into the local state depends on which model the next prompt is constructed for: edge-side prompts use only the tool identifier, whereas cloud-side prompts include both the identifier and its summary. Formally,
\begin{align}
H_{i,k+1} =
\begin{cases}
H_{i,k} \oplus t_{i,k}, & \text{to edge LLM},\\[3pt]
H_{i,k} \oplus (t_{i,k},\sigma_{i,k}), & \text{to cloud LLM}.
\end{cases}
\label{eq:update}
\end{align}
Here $\oplus$ denotes concatenation into the reasoning trace.

\end{appendices}


\end{document}